\journal{}
\newcommand{\eps}{\varepsilon}
\newcommand{\set}[1]{\left\{#1\right\}}
\newcommand{\abs}[1]{\left|#1\right|}
\newcommand{\p}{\partial}
\newcommand{\mE}{\mathbf{E}}
\newcommand{\mU}{\mathbf{U}}
\newcommand{\mV}{\mathbf{V}}
\newcommand{\mf}{\mathbf{f}}
\newcommand{\mg}{\mathbf{g}}
\newcommand{\mx}{\mathbf{x}}
\newcommand{\my}{\mathbf{y}}
\newcommand{\mz}{\mathbf{z}}
\newcommand{\vn}{\boldsymbol{\nu}}
\newcommand{\vp}{\boldsymbol{\phi}}
\newcommand{\vt}{\boldsymbol{\theta}}
\newcommand{\vx}{\boldsymbol{\xi}}
\theoremstyle{plain}
\newtheorem{thm}{Theorem}[section]
\newtheorem{lem}[thm]{Lemma}
\newtheorem{Rem}[thm]{Remark}
\begin{document}

\begin{frontmatter}



\title{Structure and properties of linear sampling method for perfectly conducting, arc-like cracks}


\author{Won-Kwang Park}
\ead{parkwk@kookmin.ac.kr}
\address{Department of Mathematics, Kookmin University, Seoul, 136-702, Korea.}

\begin{abstract}
  We consider the imaging of arbitrary shaped, arc-like perfectly conducting cracks located in the two-dimensional homogeneous space via linear sampling method. Based on the structure of eigenvectors of so-called Multi Static Response (MSR) matrix, we discover the relationship between imaging functional adopted in the linear sampling method and Bessel function of integer order of the first kind. This relationship tells us that why linear sampling method works for imaging of perfectly conducting cracks in either Transverse Magnetic (Dirichlet boundary condition) and Transverse Electric (Neumann boundary condition), and explains its certain properties. Furthermore, we suggest multi-frequency imaging functional, which improves traditional linear sampling method. Various numerical experiments are performed for supporting our explores.
\end{abstract}

\begin{keyword}
Linear sampling method \sep perfectly conducting cracks \sep Multi Static Response (MSR) matrix \sep Bessel function \sep numerical experiments



\end{keyword}

\end{frontmatter}





\section{Introduction}
Linear sampling methods for inverse scattering from one or several unknown scatterers via the far-field patterns of scattered fields were originally introduced by Colton and Kirsch \cite{CK}. It has been applied very successfully in various inverse scattering problems. Related works can be found in \cite{CC,CGK,C,CHM,HM,KR,NG} and references therein. Nowadays, linear sampling method strongly contributes in inverse problems as they have already made earlier in many others. This is due to the fact that it is fast to perform, is robust with respect to the unavoidable noise, requires no \textit{a priori} information about the scatterers, and can be applied to the various types of scatterers such as point-like scatterers, cracks, and arbitrary shaped extended inhomogeneities.

Despite of its successful applications in various area, the use of linear sampling method were heuristic without rigorous justification. Due to this reason, many authors tried to confirm why this method works, see \cite{A,AL,H}. Although, these remarkable features have shown feasibilities as a non-iterative imaging technique, a detailed structure analysis of imaging functional in linear sampling method must be considered since some phenomena cannot be explained in traditional approach (such as unexpected appearance of ghost replicas, etc.). The above phenomena gives us an motivation of exploring the structure of imaging functional and correspondingly discover certain properties of linear sampling method in the reconstruction of arbitrary shaped perfectly conducting, arc-like cracks. Our exploration is based on the physical factorization and structure of singular and eigenvectors of so called Multi-Static Response (MSR) matrix \cite{HSZ}. With this, we will establish a relationship between imaging functional in linear sampling method and Bessel function of integer order of the first kind. Furthermore, based on this structure, we introduce a multi-frequency based linear sampling method in order to improve traditional one.

This paper is organized as follows. In section \ref{sec:2}, we briefly introduce two-dimensional direct scattering problems in the existence of perfectly conducting cracks and linear sampling method. In section \ref{sec:3}, the structure of imaging functional in linear sampling method is explored, its certain properties are discussed, and improved multi-frequency based imaging functional is considered. In section \ref{sec:4}, various numerical examples are exhibited for supporting explored structure and properties. A short conclusion is mentioned in section \ref{sec:5}.

\section{Direct scattering problem and linear sampling method}\label{sec:2}
In this section, we briefly introduce the basic concept of direct scattering problem in the present of perfectly conducting crack and linear sampling method. A more detailed description can be found in various works \cite{AK,CC,C,CHM,KR}.

\subsection{Direct scattering problem}
Let $\Gamma$ denotes the arbitrary shaped perfectly conducting crack located in the homogeneous space $\mathbb{R}^2$. This is an oriented piecewise smooth nonintersecting arc without cusp that can be represented as
\begin{equation}\label{crack}
\Gamma=\set{\vp(s):s\in[a,b]}
\end{equation}
where $\vp:[a,b]\longrightarrow\mathbb{R}^2$ is an injective piecewise $\mathcal{C}^3$ function.

In the existence of $\Gamma$, let $u(\mx,\vt;k)$ be the time-harmonic total field that satisfies the two-dimensional Helmholtz wave equation
\begin{equation}\label{HelmEQ}
\Delta u(\mx,\vt;k)+k^2 u(\mx,\vt;k)=0\quad\mbox{in}\quad\mathbb{R}^2\backslash\Gamma
\end{equation}
with incident direction $\vt$ and strictly positive wave number $k=\omega\sqrt{\mu\eps}$, letting $\eps$ be the electric permittivity and $\mu$ the magnetic permeability. Throughout this paper, applied wave number $k$ is of the form $k=2\pi/\lambda$, where $\lambda$ denotes the wavelength. For the case of Transverse Magnetic (TM) polarization, the field cannot penetrate into $\Gamma$, i.e., $u(\mx,\vt;k)$ satisfies the Dirichlet boundary condition
\begin{equation}\label{HelmBC}
u(\mx,\vt;k)=0\quad\mbox{on}\quad\Gamma.
\end{equation}
Conversely, let us consider the Transverse Electric (TE) polarization case, letting $u(\mx,\vt;k)$ be the (single-component) magnetic field that satisfies the two-dimensional Helmholtz wave equation (\ref{HelmEQ}) with the Neumann boundary condition on $\Gamma$:
\begin{equation}\label{HelmNC}
\frac{\p u(\mx,\vt;k)}{\p\vn(\mx)}=0\quad\mbox{on}\quad\Gamma\backslash\set{\vp(a),\vp(b)},
\end{equation}
where $\vn(\mx)$ is a unit normal vector to $\Gamma$ at $\mx$.

Let us notice that the total field can always be decomposed as $u(\mx,\vt;k)=u_{\mathrm{inc}}(\mx,\vt;k)+u_{\mathrm{scat}}(\mx,\vt;k)$, where $u_{\mathrm{inc}}(\mx,\vt;k)=\exp(ik\vt\cdot\mx)$ is the given incident field, and $u_{\mathrm{scat}}(\mx,\vt;k)$ is the unknown scattered field, which is required to satisfy the Sommerfeld radiation condition
\[\lim_{\abs{\mx}\to\infty}\sqrt{\abs{\mx}}\left(\frac{\p u_{\mathrm{scat}}(\mx,\vt;k)}{\p\abs{\mx}}-iku_{\mathrm{scat}}(\mx,\vt;k)\right)=0\]
uniformly into all directions $\hat{\mx}=\frac{\mx}{\abs{\mx}}$. For TM case, by \cite{K}, $u_{\mathrm{scat}}(\mx,\vt;k)$ can be represented by the form of a single-layer potential
\begin{equation}\label{SLP}
  u_{\mathrm{scat}}(\mx,\vt;k)=\int_{\Gamma}\Phi(\mx,\my;k)\varphi(\my,\vt;k)d\my\quad\mbox{for}\quad\mx\in\mathbb{R}^2\backslash\Gamma
\end{equation}
with unknown density function $\varphi(\my,\vt;k)$. Here, $\Phi(\mx,\my;k)$ the two-dimensional fundamental solution to the Helmholtz equation
\[\Phi(\mx,\my;k)=\frac{i}{4}H_0^1(k\abs{\mx-\my})\quad\mbox{for}\quad\mx\ne\my,\]
expressed in terms of the Hankel function $H_0^1$ of order zero and of the first kind. And for TE case, by \cite{M1}, $u_{\mathrm{scat}}(\mx,\vt;k)$ can be represented by the form of a double-layer potential
\begin{equation}\label{DLP}
  u_{\mathrm{scat}}(\mx,\vt;k)=\int_{\Gamma}\frac{\p\Phi(\mx,\my;k)}{\p\vn(\my)}\psi(\my,\vt;k)d\my\quad\mbox{for}\quad \mx\in\mathbb{R}^2\backslash\Gamma.
\end{equation}

The far-field pattern $u_{\infty}(\hat{\mx},\vt;k)$ of the scattered field $u_{\mathrm{scat}}(\mx,\vt;k)$ is defined on $\mathbb{S}^1$. It can be represented as
\[u_{\mathrm{scat}}(\mx,\vt;k)=\frac{\exp(ik\abs{\mx})}{\sqrt{\abs{\mx}}}\bigg(u_{\infty}(\hat{\mx},\vt;k)+\mathcal{O}\left(\frac{1}{\abs{\mx}}\right)\bigg)\]
uniformly in all directions $\hat{\mx}=\mx/\abs{\mx}$ and $\abs{\mx}\longrightarrow\infty$. For TM case, the far field pattern is represented as follows
\begin{align}
\begin{aligned}\label{FFPD}
u_\infty(\hat{\mx},\vt;k)&=-\frac{\exp(i\frac{\pi}{4})}{\sqrt{8\pi k}}\int_{\Gamma}e^{-ik\hat{\mx}\cdot\my}\left(\frac{\p u_{+}(\my,\vt;k)}{\p\vn(\my)}-\frac{\p u_{-}(\my,\vt;k)}{\p\vn(\my)}\right)d\my\\
&=\frac{1+i}{4\sqrt{\pi k}}\int_{\Gamma}e^{-ik\hat{\mx}\cdot\my}\varphi(\my,\vt;k)d\my.
\end{aligned}
\end{align}

Similarly, the far-field pattern for TE case is given by
\begin{align}
\begin{aligned}\label{FFPN}
u_\infty(\hat{\mx},\vt;k)&=-\frac{\exp(i\frac{\pi}{4})}{\sqrt{8\pi k}}\int_{\Gamma}\frac{\p e^{-ik\hat{\mx}\cdot\my}}{\p\vn(\my)}\bigg(u_{+}(\my,\vt;k)-u_{-}(\my,\vt;k)\bigg)d\my\\
&=\frac{1-i}{4}\sqrt{\frac{k}{\pi}}\int_{\Gamma}\hat{\mx}\cdot\vn(\my)e^{-ik\hat{\mx}\cdot\my}\psi(\my,\vt;k)d\my.
\end{aligned}
\end{align}

\subsection{Linear sampling method}
In this section, we apply the far-field pattern formulas (\ref{FFPD}) and (\ref{FFPN}) in order to introduce the linear sampling method. For this purpose, we consider the following Multi-Static Response (MSR) matrix
\[\mathbb{K}(k):=\bigg[K_{jl}(\hat{\mx}_j,\vt_l;k)\bigg]_{j,l=1}^{N}=\bigg[u_\infty(\hat{\mx}_j,\vt_l;k)\bigg]_{j,l=1}^{N}.\]
From now on, we assume that the directions of incident and observation are coincide i.e., if $\hat{\mx}_j=-\vt_j$. Then, for TM case, the MSR matrix $\mathbb{K}$ can be written as
\begin{equation}\label{MSRD}
\mathbb{K}(k)=\frac{1+i}{4\sqrt{\pi k}}\int_{\Gamma}\mathbb{E}_{\mathrm{D}}(\hat{\mx},\my;k)\mathbb{F}_{\mathrm{D}}(\hat{\mx},\my;k)^Td\my,
\end{equation}
where $\mathbb{E}_{\mathrm{D}}(\hat{\mx},\my;k)$ is the illumination vector
\begin{align}
\begin{aligned}\label{VecED}
\mathbb{E}_{\mathrm{D}}(\hat{\mx},\my;k)&=\bigg[\exp(-ik\hat{\mx}_1\cdot\my),\exp(-ik\hat{\mx}_2\cdot\my),\cdots,\exp(-ik\hat{\mx}_N\cdot\my)\bigg]^T\bigg|_{\hat{\mx}_j=-\vt_j}\\ &=\bigg[\exp(ik\vt_1\cdot\my),\exp(ik\vt_2\cdot\my),\cdots,\exp(ik\vt_N\cdot\my)\bigg]^T
\end{aligned}
\end{align}
and where $\mathbb{F}_{\mathrm{D}}(\hat{\mx},\my;k)$ is the resulting density vector
\begin{equation}
\mathbb{F}_{\mathrm{D}}(\hat{\mx},\my;k)=\bigg[\varphi(\my,\vt_1;k),\varphi(\my,\vt_2;k),\cdots,\varphi(\my,\vt_N;k)\bigg]^T.
\end{equation}
Here, $\set{\hat{\mx}_j}_{j=1}^N\subset\mathbb{S}^1$ is a discrete finite set of observation directions and $\set{\vt_l}_{l=1}^N\subset\mathbb{S}^1$ is the same number of incident directions.

Formula (\ref{MSRD}) is a factorization of the MSR matrix that separates the known incoming wave information from the unknown information. The range of $\mathbb{K}(k)$ is determined by the span of the $\mathbb{E}_{\mathrm{D}}(\hat{\mx},\my;k)$ corresponding to the $\Gamma$, i.e., we can define a signal subspace by using a set of left singular vectors of $\mathbb{K}(k)$. We refer to \cite{HSZ,PL1} for a detailed discussion.

Assume that the crack is divided into $M$ different segments of size of order half the wavelength $\lambda/2$. Having in mind the Rayleigh resolution limit, any detail less than one-half of the wavelength cannot be probed, and only one point, say $\my_m$ for $m=1,2,\cdots,M$, at each segment is expected to contribute at the image space of the response matrix $\mathbb{K}(k)$, refer to \cite{AKLP,PL1,PL3}.

Since the coincide configuration of incident and observation directions, MSR matrix $\mathbb{K}(k)$ is complex symmetric but not Hermitian, refer to \cite{C,HHSZ,P3,P4,PL1,PL2,PL3}. Hence, one must consider the Hermitian matrix $\mathbb{A}(k)=\mathbb{K}(k)^*\mathbb{K}(k)$ with eigenvalues $\set{\sigma_1(k),\sigma_2(k),\cdots,\sigma_N(k)}$ satisfying $\sigma_1(k)\geq\sigma_2(k)\geq\cdots\geq\sigma_N(k)$ and corresponding eigenvectors $\set{\mE_1(k),\mE_2(k),\cdots,\mE_N(k)}$. Then, the range of $\mathbb{A}^{1/4}$ can be determined from the eigenvalues and eigenvectors of $\mathbb{A}$ such that
\[\mathrm{Ran}\mathbb{A}^{1/4}=\set{\mf:\sum_{n=1}^{N}\frac{|\langle\mE_n(k),\mf\rangle|^2}{\sqrt{|\sigma_n(k)|}}<+\infty}.\]
Now, by introducing a test vector $\mg_{\mathrm{D}}(\mz;k)$ as
\begin{equation}\label{testvector}
  \mg_{\mathrm{D}}(\mz;k):=\frac{1}{\sqrt{N}}\bigg[\exp(ik\vt_1\cdot\mz),\exp(ik\vt_2\cdot\mz),\cdots,\exp(ik\vt_N\cdot\mz)\bigg]^T.
\end{equation}
With this, imaging functional $\mathcal{I}_{\mathrm{LS}}(\mz;k)$ is defined as
\begin{equation}\label{ImagingLS}
  \mathcal{I}_{\mathrm{LS}}(\mz;k)=\left(\sum_{n=1}^{N}\frac{|\langle\mE_n(k),\mg_{\mathrm{D}}(\mz;k)\rangle|^2}{\sqrt{|\sigma_n(k)|}}\right)^{-1},
\end{equation}
where $\langle\mf,\mg\rangle=\mf\cdot\overline{\mg}$. Then the value of $\mathcal{I}_{\mathrm{LS}}(\mz;k)$ will be almost zero whenever $\mz\notin\Gamma$ and nonzero whenever $\mz=\my_m\in\Gamma$, for $m=1,2,\cdots,M$.

For TE case, the MSR matrix $\mathbb{K}(k)$ can be decomposed as
\begin{equation}\label{MSRN}
\mathbb{K}(k)=\frac{1-i}{4}\sqrt{\frac{k}{\pi}}\int_{\Gamma}\mathbb{E}_{\mathrm{N}}(\hat{\mx},\my;k)\mathbb{F}_{\mathrm{N}}(\hat{\mx},\my;k)^Td\my,
\end{equation}
where $\mathbb{E}_{\mathrm{N}}(\hat{\mx},\my;k)$ is the illumination vector
\begin{align}
\begin{aligned}\label{VecEN}
\mathbb{E}_{\mathrm{N}}(\hat{\mx},\my;k)&=-\bigg[\hat{\mx}_1\cdot\vn(\my)\exp(-ik\hat{\mx}_1\cdot \my),\cdots,\hat{\mx}_N\cdot\vn(\my)\exp(-ik\hat{\mx}_N\cdot\my)\bigg]^T\bigg|_{\hat{\mx}_j=-\vt_j}\\
&=\bigg[\vt_1\cdot\vn(\my)\exp(ik\vt_1\cdot\my),\cdots,\vt_N\cdot\vn(\my)\exp(ik\vt_N\cdot\my)\bigg]^T
\end{aligned}
\end{align}
and where $\mathbb{F}_{\mathrm{N}}(\hat{\mx},\my;k)$ is the corresponding density vector
\begin{equation}
\mathbb{F}_{\mathrm{N}}(\hat{\mx},\my;k)=\bigg[\psi(\my,\vt_1;k),\psi(\my,\vt_2;k),\cdots,\psi(\my,\vt_N;k)\bigg]^T.
\end{equation}

Formula (\ref{MSRN}) is a factorization of the MSR matrix that, like with the Dirichlet boundary condition case, separates the known incoming plane wave information from the unknown information. The range of $\mathbb{K}(k)$ is determined by the span of the $\mathbb{E}_{\mathrm{N}}(\hat{\mx},\my;k)$ corresponding to the $\Gamma$, i.e., we can define a signal subspace by using a set of left singular vectors of $\mathbb{K}(k)$.

The imaging algorithm for the Neumann boundary condition case is very similar to the Dirichlet boundary condition case. Based on the structure of (\ref{VecEN}), define a vector
\begin{equation}\label{testvector2}
\mg_{\mathrm{N}}(\mz;k)=\frac{1}{\sqrt{N}}\bigg[\vt_1\cdot\vn(\mz)\exp(ik\vt_1\cdot\mz),\cdots,\vt_N\cdot\vn(\mz)\exp(ik\vt_N\cdot\mz)\bigg]^T
\end{equation}
Since the unit normal $\vn(\mx)$ is unknown, for each point $\mz$ of the search domain, we use a set of directions $\vn_l(\mz)$ for $l=1,2,\cdots,L$, and we choose $\vn_l(\mx)$ which is to maximize the imaging functional among these directions at $\mz$. With this considerations, $\mathcal{I}_{\mathrm{LS}}(\mz;k)$ is defined as
\[\mathcal{I}_{\mathrm{LS}}(\mz;k)=\left\{\sum_{n=1}^{N}\left(\max_{1\leq l\leq L}\frac{|\langle\mE_n(k),\mg_{\mathrm{N}}(\mz;k)\rangle|^2}{\sqrt{|\sigma_n(k)|}}\right)\right\}^{-1}.\]

\begin{Rem}
  Due to the unknown of normal direction to $\Gamma$, mapping of $\mathcal{I}_{\mathrm{LS}}(\mz;k)$ requires large computational costs. Moreover, based on the results in \cite{Ppreprint,PL1,PL3}, the results are poor in general. Hence, throughout this paper, we apply (\ref{testvector}) for TE case instead of using (\ref{testvector2}) and explore its structure.
\end{Rem}

\section{Structure of imaging functional}\label{sec:3}
In this section, we carefully identify the structure of (\ref{ImagingLS}) for TM and TE cases. For this, we recall some useful results as follows.

\begin{lem}[See \cite{G,Ppreprint}]\label{TheoremBessel}
  For sufficiently large $N$, $\vt_n\in\mathbb{S}^1$, $n=1,2,\cdots,N$, and any two-dimensional vectors $\vx,\mx\in\mathbb{R}^2$, following relation holds:
  \begin{align*}
    &\sum_{n=1}^{N}\exp(i\omega\vt_n\cdot\mx)\approx\int_{\mathbb{S}^1}\exp(i\omega\vt\cdot\mx)d\vt=2\pi J_0(\omega|\mx|),\\
    &\sum_{n=1}^{N}\vt_n\cdot\vx\exp(ik\vt_n\cdot\mx)\approx\int_{\mathbb{S}^1}\vt\cdot\vx\exp(ik\vt\cdot\mx)d\vt=2\pi i\left(\frac{\mx}{|\mx|}\cdot\vx\right)J_1(k|\mx|),
  \end{align*}
  where $J_n$ denotes the Bessel function of integer order $n$ of the first kind.
\end{lem}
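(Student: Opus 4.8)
The plan is to prove both displayed relations in two stages: first replace the finite sum by the integral over $\mathbb{S}^1$ via a quadrature (Riemann-sum) estimate, then evaluate that integral in closed form using the classical integral representation of $J_0$ together with the derivative relation $J_0'=-J_1$. The second relation will be obtained from the first by differentiating under the integral sign, so essentially all the real content is concentrated in the single identity $\int_{\mathbb{S}^1}\exp(i\omega\vt\cdot\mx)\,d\vt=2\pi J_0(\omega|\mx|)$.

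For the approximation step I would assume, as is natural here, that the directions $\set{\vt_n}_{n=1}^N$ are (asymptotically) equidistributed on $\mathbb{S}^1$, e.g. $\vt_n=(\cos\tfrac{2\pi n}{N},\sin\tfrac{2\pi n}{N})$. Then $\tfrac{2\pi}{N}\sum_{n=1}^N f(\vt_n)$ is exactly the trapezoidal-rule (equivalently, left-endpoint Riemann-sum) approximation of $\int_{\mathbb{S}^1}f\,d\vt$ in the angular variable. For the two integrands of interest, $f(\vt)=\exp(i\omega\vt\cdot\mx)$ and $f(\vt)=(\vt\cdot\vx)\exp(ik\vt\cdot\mx)$, which are $\mathcal{C}^\infty$ and $2\pi$-periodic in the angle, the quadrature error tends to $0$ as $N\to\infty$ (in fact faster than any power of $1/N$, by the periodic Euler--Maclaurin estimate); this is precisely the sense of ``$\approx$ for sufficiently large $N$'' in the statement.

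For the first exact identity, parametrize $\vt=(\cos\phi,\sin\phi)$, $\phi\in[0,2\pi)$, and write $\mx=|\mx|(\cos\alpha,\sin\alpha)$, so that $\vt\cdot\mx=|\mx|\cos(\phi-\alpha)$. By $2\pi$-periodicity the shift by $\alpha$ drops out and
\[
\int_{\mathbb{S}^1}\exp(i\omega\vt\cdot\mx)\,d\vt=\int_0^{2\pi}\exp\!\bigl(i\omega|\mx|\cos\phi\bigr)\,d\phi=2\pi J_0(\omega|\mx|),
\]
the last step being the classical formula $J_0(t)=\tfrac{1}{2\pi}\int_0^{2\pi}e^{it\cos\phi}\,d\phi$ (the imaginary part of the integrand integrates to zero under $\phi\mapsto\phi+\pi$, leaving the standard Bessel integral). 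For the second identity, since $\nabla_{\mx}\exp(ik\vt\cdot\mx)=ik\,\vt\,\exp(ik\vt\cdot\mx)$ and the integrand is $\mathcal{C}^\infty$ on the compact set $\mathbb{S}^1$, differentiation under the integral sign is legitimate, and applying $\nabla_{\mx}$ to the first identity (with $\omega$ renamed $k$) gives
\[
\int_{\mathbb{S}^1}\vt\,\exp(ik\vt\cdot\mx)\,d\vt=\frac{1}{ik}\,\nabla_{\mx}\bigl(2\pi J_0(k|\mx|)\bigr)=\frac{2\pi}{ik}\,J_0'(k|\mx|)\,k\,\frac{\mx}{|\mx|}=2\pi i\,J_1(k|\mx|)\,\frac{\mx}{|\mx|},
\]
using $J_0'=-J_1$ and $-1/i=i$. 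Taking the Euclidean inner product with the fixed vector $\vx$ yields $\int_{\mathbb{S}^1}(\vt\cdot\vx)\exp(ik\vt\cdot\mx)\,d\vt=2\pi i\bigl(\tfrac{\mx}{|\mx|}\cdot\vx\bigr)J_1(k|\mx|)$, as claimed.

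The only genuinely soft point is the quadrature step: one must pin down (or assume) how the $\vt_n$ are distributed on $\mathbb{S}^1$ and invoke the corresponding convergence estimate for the trapezoidal rule on periodic functions. Once that is granted, the remaining arguments are short and entirely classical, hinging only on the integral representation of $J_0$ and the recurrence $J_0'=-J_1$; in particular no smallness or geometric hypothesis on $\mx$, $\vx$ is needed beyond $\mx\neq 0$ (and for $\mx=0$ both sides are read off directly, giving $2\pi$ and $0$ respectively).
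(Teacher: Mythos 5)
Your argument is correct and is the standard one: the paper does not prove this lemma itself (it simply cites \cite{G,Ppreprint}), and the classical route you take --- equidistributed directions plus the trapezoidal rule for smooth periodic integrands, the integral representation $J_0(t)=\frac{1}{2\pi}\int_0^{2\pi}e^{it\cos\phi}\,d\phi$, and differentiation under the integral sign with $J_0'=-J_1$ --- is exactly what those references use. The one point worth flagging is normalization: as literally stated the lemma equates $\sum_{n=1}^{N}$ with $\int_{\mathbb{S}^1}$, which differ by a factor $2\pi/N$; your quadrature step correctly identifies $\frac{2\pi}{N}\sum_{n=1}^{N}f(\vt_n)$ as the approximation of the integral, and indeed the paper's own use of the lemma in the proof of Theorem \ref{TheoremLS} employs the normalized form $\frac{1}{N}\sum_{p=1}^{N}\exp(ik\vt_p\cdot(\my_n-\mz))\approx J_0(k|\my_n-\mz|)$, so your version is the one actually needed.
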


\begin{lem}[See \cite{PMUSIC}]
  Let $\mg_{\mathrm{D}}(\mz;k)$ be a test vector defined in (\ref{testvector}), $N$ is sufficiently large enough, and the Singular Value Decomposition (SVD) of $\mathbb{A}(k)$ is written by
  \[\mathbb{A}(k)=\mathbb{K}(k)^*\mathbb{K}(k)=\sum_{n=1}^{N}\tau_n(k)\mU_n(k)\mV_n(k)^*,\]
  where $\tau_n(k)$ denotes the singular values of $\mathbb{A}(k)$ satisfying
  \[\tau_1(k)\geq\tau_2(k)\geq\cdots\geq\tau_M(k)\]
  and
  \begin{equation}\label{smallSV}
    \tau_n(k)\approx0\quad\mbox{for}\quad n>M,
  \end{equation}
  $\mU_n(k)$ and $\mV_n(k)$ are left- and right-singular vectors of $\mathbb{A}$, respectively. Let $\mathbb{I}_N$ be the $N\times N$ identity matrix and
  \[\Psi(\mz,\my_m;k):=\left(\mathbb{I}_{N}-\sum_{n=1}^{M}\mU_n(k)\mU_n(k)^*\right)\mg_{\mathrm{D}}(\mz;k)=\left(\sum_{n=M+1}^{N}\mU_n(k)\mU_n(k)^*\right)\mg_{\mathrm{D}}(\mz;k).\]
  Then for TM case, $\Psi(\mz,\my_m;k)$ is represented as
  \begin{equation}\label{structureMUSIC1}
    \Psi(\mz,\my_m;k)=1-\sum_{m=1}^{M}J_0(k|\my_m-\mz|)^2.
  \end{equation}
  And for TE case, $\Psi(\mz,\my_n;k)$ is written as
  \begin{equation}\label{structureMUSIC2}
    \Psi(\mz,\my_m;k)=1-\sum_{m=1}^{M}\left(\frac{\my_m-\mz}{|\my_m-\mz|}\cdot\vn(\my_m)\right)^2J_1(k|\my_m-\mz|)^2.
  \end{equation}
\end{lem}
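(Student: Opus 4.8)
The plan is to use the physical factorizations~(\ref{MSRD}) and~(\ref{MSRN}) to pin down the signal subspace of $\mathbb{A}(k)$, and then to evaluate the residual of the test vector against the complementary noise subspace by means of Lemma~\ref{TheoremBessel}. Throughout I read $\Psi(\mz,\cdot;k)$ as the squared residual norm $\left\|\left(\mathbb{I}_{N}-\sum_{n=1}^{M}\mU_n(k)\mU_n(k)^*\right)\mg_{\mathrm{D}}(\mz;k)\right\|^2$, equivalently $\left\langle\left(\mathbb{I}_{N}-\sum_{n=1}^{M}\mU_n(k)\mU_n(k)^*\right)\mg_{\mathrm{D}}(\mz;k),\mg_{\mathrm{D}}(\mz;k)\right\rangle$; since $\mathbb{I}_{N}-\sum_{n=1}^{M}\mU_n(k)\mU_n(k)^*$ is an orthogonal projector and $\|\mg_{\mathrm{D}}(\mz;k)\|=1$, this equals $1-\left\|\left(\sum_{n=1}^{M}\mU_n(k)\mU_n(k)^*\right)\mg_{\mathrm{D}}(\mz;k)\right\|^2$, so everything reduces to computing the norm of the projection of $\mg_{\mathrm{D}}(\mz;k)$ onto the signal subspace.

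For the TM case I would first discretize the line integral in~(\ref{MSRD}) over the $M$ segments of $\Gamma$ with quadrature weights $w_m>0$, writing $\mathbb{K}(k)\approx\frac{1+i}{4\sqrt{\pi k}}\sum_{m=1}^{M}w_m\,\mathbb{E}_{\mathrm{D}}(\hat{\mx},\my_m;k)\mathbb{F}_{\mathrm{D}}(\hat{\mx},\my_m;k)^T$. This exhibits $\mathbb{K}(k)$ as a matrix of rank $M$ to this order, so $\mathbb{A}(k)=\mathbb{K}(k)^*\mathbb{K}(k)$ has $\tau_n(k)\approx0$ for $n>M$ (this is~(\ref{smallSV})), and the signal subspace $\mathrm{span}\set{\mU_1(k),\dots,\mU_M(k)}$ is, by this factorization and the discussion after~(\ref{MSRD}), (approximately) $\mathrm{span}\set{\mathbb{E}_{\mathrm{D}}(\hat{\mx},\my_m;k):m=1,\dots,M}$. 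The key observation is that, by~(\ref{VecED}) and~(\ref{testvector}), the normalized illumination vectors are exactly the test vectors sampled at the crack points, $\frac{1}{\sqrt{N}}\mathbb{E}_{\mathrm{D}}(\hat{\mx},\my_m;k)=\mg_{\mathrm{D}}(\my_m;k)$. Applying the first identity of Lemma~\ref{TheoremBessel} to the Gram matrix gives, for $m\ne m'$, $\langle\mg_{\mathrm{D}}(\my_m;k),\mg_{\mathrm{D}}(\my_{m'};k)\rangle=\frac{1}{N}\sum_{n=1}^{N}\exp(ik\vt_n\cdot(\my_{m'}-\my_m))\approx J_0(k\abs{\my_m-\my_{m'}})$, which is small since $\abs{\my_m-\my_{m'}}$ exceeds the Rayleigh resolution length $\lambda/2$; hence $\set{\mg_{\mathrm{D}}(\my_m;k)}_{m=1}^{M}$ is an approximately orthonormal basis of the signal subspace and $\sum_{n=1}^{M}\mU_n(k)\mU_n(k)^*\approx\sum_{m=1}^{M}\mg_{\mathrm{D}}(\my_m;k)\mg_{\mathrm{D}}(\my_m;k)^*$.

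Inserting this projector into the reduction of the first paragraph gives $\Psi(\mz,\cdot;k)\approx1-\sum_{m=1}^{M}\abs{\langle\mg_{\mathrm{D}}(\mz;k),\mg_{\mathrm{D}}(\my_m;k)\rangle}^2$, and one more application of the first identity of Lemma~\ref{TheoremBessel}, $\langle\mg_{\mathrm{D}}(\mz;k),\mg_{\mathrm{D}}(\my_m;k)\rangle=\frac{1}{N}\sum_{n=1}^{N}\exp(ik\vt_n\cdot(\mz-\my_m))\approx J_0(k\abs{\my_m-\mz})$, yields~(\ref{structureMUSIC1}). For the TE case I would run the same three steps starting from~(\ref{MSRN}), with $\mathbb{E}_{\mathrm{N}}$ replacing $\mathbb{E}_{\mathrm{D}}$ as the generator of the signal subspace but keeping the test vector $\mg_{\mathrm{D}}$ as agreed in the Remark above; the only change is that the inner product $\langle\mg_{\mathrm{D}}(\mz;k),\mathbb{E}_{\mathrm{N}}(\hat{\mx},\my_m;k)\rangle$ now carries the extra weight $\vt_n\cdot\vn(\my_m)$, so the \emph{second} identity of Lemma~\ref{TheoremBessel} applies and produces $i\left(\frac{\my_m-\mz}{\abs{\my_m-\mz}}\cdot\vn(\my_m)\right)J_1(k\abs{\my_m-\mz})$ in place of $J_0(k\abs{\my_m-\mz})$; taking squared moduli (and absorbing the normalization constant of $\mathbb{E}_{\mathrm{N}}$ into the intrinsically defined unit eigenvectors $\mU_n(k)$) gives~(\ref{structureMUSIC2}).

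The hard part, and where the argument is genuinely heuristic, is the identification of the signal subspace together with the near-orthonormality: one must justify that, in the well-resolved regime, the span of the dominant eigenvectors of $\mathbb{A}(k)=\mathbb{K}(k)^*\mathbb{K}(k)$ is well approximated by the span of the illumination vectors $\mathbb{E}_{\mathrm{D}}(\hat{\mx},\my_m;k)$ (respectively $\mathbb{E}_{\mathrm{N}}(\hat{\mx},\my_m;k)$) rather than by that of the unavailable density vectors $\mathbb{F}_{\mathrm{D}}$ (respectively $\mathbb{F}_{\mathrm{N}}$), and that these illumination vectors, once $\Gamma$ is cut into cells of size $\lambda/2$, are close to orthogonal. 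Both points rest on the resolution-cell hypothesis and on the decay of $J_0$ and $J_1$, and every ``$\approx$'' above should be read as asymptotic in $N$ (for the validity of Lemma~\ref{TheoremBessel}) and in the mutual separation of the $\my_m$; the TE case additionally needs careful bookkeeping of the $\vt_n\cdot\vn(\my_m)$ normalization of $\mathbb{E}_{\mathrm{N}}$, which is the one computation I would carry out in full.
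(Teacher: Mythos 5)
Note first that the paper does not actually prove this lemma---it is imported verbatim from \cite{PMUSIC}---but the identical computation appears inside the paper's proof of Theorem \ref{TheoremLS} (steps (\ref{term2}) and (\ref{term4})), and your argument coincides with it: you read $\Psi$ as the quadratic form $\mg_{\mathrm{D}}(\mz;k)^*\left(\mathbb{I}_N-\sum_{n=1}^{M}\mU_n(k)\mU_n(k)^*\right)\mg_{\mathrm{D}}(\mz;k)$, identify $\mU_n(k)\approx\mg_{\mathrm{D}}(\my_n;k)$ (resp.\ $\mg_{\mathrm{N}}(\my_n;k)$) from the physical factorization as in \cite{HSZ}, and evaluate the inner products with Lemma \ref{TheoremBessel}. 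This is essentially the same approach, at the same level of heuristic rigor (the signal-subspace identification and the TE normalization you flag are likewise asserted without proof in the paper), so the proposal is correct relative to the paper's own standard.
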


Now, we state the main result about the structure of imaging functional (\ref{ImagingLS}).
\begin{thm}\label{TheoremLS}
  Assume that $k$ and $N$ are sufficiently large enough and $\mg_{\mathrm{D}}(\mz;k)$ is defined in (\ref{testvector}). Let $\mathcal{N}:=\set{1,2,\cdots,N}$ and $\rho:\mathcal{N}\longrightarrow\mathcal{N}$ be a bijection. Then
  \begin{enumerate}
    \item For TM case, $\mathcal{I}_{\mathrm{LS}}(\mz;k)$ can be represented as follows.
      \begin{equation}\label{structureLS1}
        \mathcal{I}_{\mathrm{LS}}(\mz;k)\approx\left\{\sum_{n=1}^{M}\frac{J_0(k|\my_n-\mz|)^2}{\sqrt{|\sigma_{\rho^{-1}(n)}(k)|}} +\frac{1}{\sqrt{\epsilon}}\left(1-\sum_{n=1}^{M}J_0(k|\my_n-\mz|)^2\right)\right\}^{-1},
      \end{equation}
    \item For TE case, $\mathcal{I}_{\mathrm{LS}}(\mz;k)$ can be represented as follows.
      \begin{align}
      \begin{aligned}\label{structureLS2}
        \mathcal{I}_{\mathrm{LS}}(\mz;k)\approx&\left\{\sum_{n=1}^{M}\frac{1}{\sqrt{|\sigma_{\rho^{-1}(n)}(k)|}}\left(\frac{\my_n-\mz}{|\my_n-\mz|}\cdot\vn(\my_n)\right)^2J_1(k|\my_n-\mz|)^2\right.\\
        &\left.+\frac{1}{\sqrt{\epsilon}}\left(1-\sum_{n=1}^{M}\left(\frac{\my_n-\mz}{|\my_n-\mz|}\cdot\vn(\my_n)\right)^2J_1(k|\my_n-\mz|)^2\right)\right\}^{-1},
      \end{aligned}
      \end{align}
  \end{enumerate}
  where $\epsilon$ is a small value close to zero.
\end{thm}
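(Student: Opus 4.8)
The plan is to analyze the sum $\sum_{n=1}^{N}\frac{|\langle\mE_n(k),\mg_{\mathrm{D}}(\mz;k)\rangle|^2}{\sqrt{|\sigma_n(k)|}}$ appearing in the definition (\ref{ImagingLS}) by splitting the eigenvalue index set into the ``significant'' part, where $\sigma_n(k)$ is non-negligible, and the ``noise'' part, where $\sigma_n(k)\approx\epsilon$. Since $\mathbb{A}(k)=\mathbb{K}(k)^*\mathbb{K}(k)$ has rank approximately $M$ by the Rayleigh-resolution discussion and (\ref{smallSV}), the eigenvectors $\set{\mE_n(k)}$ coincide (up to the reindexing $\rho$ matching the ordering conventions of the two lemmas) with the singular vectors $\set{\mU_n(k)}$; the bijection $\rho:\mathcal{N}\to\mathcal{N}$ is exactly what accounts for the fact that the eigenvalue ordering $\sigma_1\geq\cdots\geq\sigma_N$ need not line up index-by-index with the singular-value ordering $\tau_1\geq\cdots\geq\tau_M$ used in the previous lemma. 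Thus $\{\mE_{\rho(1)},\dots,\mE_{\rho(M)}\}$ spans the signal subspace and $\{\mE_{\rho(M+1)},\dots,\mE_{\rho(N)}\}$ spans its orthogonal complement (the noise subspace).

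Next I would use the completeness relation $\sum_{n=1}^N |\langle\mE_n(k),\mg_{\mathrm{D}}(\mz;k)\rangle|^2 = \|\mg_{\mathrm{D}}(\mz;k)\|^2 = 1$, which holds because $\set{\mE_n(k)}$ is an orthonormal basis and the test vector (\ref{testvector}) is normalized by the factor $1/\sqrt{N}$. This lets me write the noise-subspace contribution as
\begin{equation*}
\sum_{n=M+1}^{N}|\langle\mE_{\rho(n)}(k),\mg_{\mathrm{D}}(\mz;k)\rangle|^2 = 1 - \sum_{n=1}^{M}|\langle\mE_{\rho(n)}(k),\mg_{\mathrm{D}}(\mz;k)\rangle|^2,
\end{equation*}
and this is precisely the quantity $\Psi(\mz,\my_m;k)$ computed in the previous lemma, namely $1-\sum_{m=1}^{M}J_0(k|\my_m-\mz|)^2$ in the TM case and $1-\sum_{m=1}^{M}\left(\frac{\my_m-\mz}{|\my_m-\mz|}\cdot\vn(\my_m)\right)^2 J_1(k|\my_m-\mz|)^2$ in the TE case. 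For the signal-subspace terms, I would appeal again to the structure result: each $|\langle\mU_n(k),\mg_{\mathrm{D}}(\mz;k)\rangle|^2$ for $n\le M$ equals $J_0(k|\my_n-\mz|)^2$ in the TM case (respectively the $J_1$ expression in the TE case), which follows from Lemma \ref{TheoremBessel} applied to the physical factorization (\ref{MSRD}) / (\ref{MSRN}) together with the asymptotic orthogonality of the columns $\mE_{\mathrm{D}}(\hat{\mx},\my_m;k)$ for well-separated $\my_m$. Plugging these into the definition, with $\sqrt{|\sigma_n(k)|}\approx\sqrt{\epsilon}$ for $n$ in the noise block, yields (\ref{structureLS1}) and (\ref{structureLS2}) directly.

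The main obstacle I expect is making the identification $|\langle\mE_{\rho(n)}(k),\mg_{\mathrm{D}}(\mz;k)\rangle|^2\approx J_0(k|\my_n-\mz|)^2$ precise term-by-term rather than just in aggregate: the previous lemma gives the \emph{sum} of the noise-subspace projections in closed form, but extracting each individual signal-subspace inner product requires that the $M$ vectors $\mE_{\mathrm{D}}(\hat{\mx},\my_m;k)$ be (asymptotically) mutually orthogonal, which is where ``$k$ sufficiently large'' and the half-wavelength segmentation enter. I would handle this by invoking Lemma \ref{TheoremBessel} to show $\langle\mE_{\mathrm{D}}(\hat{\mx},\my_m;k),\mE_{\mathrm{D}}(\hat{\mx},\my_{m'};k)\rangle\approx N J_0(k|\my_m-\my_{m'}|)$, which is $\approx 0$ for $m\ne m'$ by the decay of $J_0$ and the separation $|\my_m-\my_{m'}|\gtrsim\lambda/2$, so that after Gram--Schmidt the signal eigenvectors are essentially the normalized $\mE_{\mathrm{D}}(\hat{\mx},\my_m;k)$ themselves and $\langle\mE_{\rho(n)},\mg_{\mathrm{D}}(\mz;k)\rangle\approx\frac{1}{N}\langle\mE_{\mathrm{D}}(\hat{\mx},\my_n;k),\mE_{\mathrm{D}}(\hat{\mx},\mz;k)\rangle\approx J_0(k|\my_n-\mz|)$. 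A secondary bookkeeping point is to state explicitly that the bijection $\rho$ is what reconciles the two different orderings, so that no generality is lost and the statement holds for whichever pairing the SVD and eigendecomposition actually produce.
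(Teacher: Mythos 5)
Your proposal is correct and follows essentially the same route as the paper: identify the eigenvectors $\mE_n(k)$ with the singular vectors $\mU_{\rho(n)}(k)$ via the bijection $\rho$, substitute the physical form $\mU_n(k)\approx\mg_{\mathrm{D}}(\my_n;k)$ (resp.\ $\mg_{\mathrm{N}}(\my_n;k)$) for the signal block and evaluate the inner products with Lemma \ref{TheoremBessel}, then handle the noise block with $\tau_n(k)\approx\epsilon$ and the completeness relation $\sum_{n=M+1}^{N}\mU_n(k)\mU_n(k)^{*}=\mathbb{I}_N-\sum_{n=1}^{M}\mU_n(k)\mU_n(k)^{*}$ together with (\ref{structureMUSIC1})--(\ref{structureMUSIC2}). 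Your added remark on the asymptotic orthogonality of the illumination vectors is in fact more explicit than the paper, which simply cites the factorization literature for the form of $\mU_n(k)$.
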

\begin{proof}
  Based on the Eigenvalue Decomposition (ED) and SVD of hermitian matrix $\mathbb{A}(k)$, we can observe that
  \begin{equation}\label{EDSVD}
    \mathbb{A}(k)=\sum_{n=1}^{N}\sigma_n(k)\mE_n(k)\mE_n(k)^T=\sum_{n=1}^{N}\tau_n(k)\mU_n(k)\mV_n(k)^*.
  \end{equation}
  Therefore, for $n=1,2,\cdots,N$, $|\sigma_n(k)|\approx\tau_{\rho(n)}(k)$, and $\mE_n(k)\approx\mU_{\rho(n)}(k)$ or $\mE_n(k)\approx-\mU_{\rho(n)}(k)$. This means that (\ref{ImagingLS}) can be rewritten as
  \[\mathcal{I}_{\mathrm{LS}}(\mz;k)\approx\left(\sum_{n=1}^{N}\frac{|\langle\mU_{\rho(n)}(k),\mg_{\mathrm{D}}(\mz;k)\rangle|^2}{\sqrt{|\tau_{\rho(n)}(k)|}}\right)^{-1} =\left(\sum_{n=1}^{N}\frac{|\langle\mU_n(k),\mg_{\mathrm{D}}(\mz;k)\rangle|^2}{\sqrt{\tau_n(k)}}\right)^{-1}.\]

  Throughout this proof, based on (\ref{smallSV}), we let $\tau_n(k)\approx\epsilon$ for $n=M+1,M+2,\cdots,N$.

  \begin{enumerate}
    \item For TM case, Based on the physical factorization of MSR matrix, $\mU_n(k)$ is of the form (see \cite{HSZ} for instance)
      \[\mU_n(k)\approx\frac{1}{\sqrt{N}}\bigg[\exp(ik\vt_1\cdot\my_n),\exp(ik\vt_2\cdot\my_n),\cdots,\exp(ik\vt_N\cdot\my_n)\bigg]^T=\mg_{\mathrm{D}}(\my_n;k),\]
      for $n=1,2,\cdots,M$. With this, we can easily evaluate
      \begin{align}
      \begin{aligned}\label{term1}
        \sum_{n=1}^{M}\frac{|\langle\mU_n(k),\mg_{\mathrm{D}}(\mz;k)\rangle|^2}{\sqrt{\tau_n(k)}}&=\sum_{n=1}^{M}\frac{|\langle\mg_{\mathrm{D}}(\my_n;k),\mg_{\mathrm{D}}(\mz;k)\rangle|^2}{\sqrt{\tau_n(k)}}\\
        &=\sum_{n=1}^{M}\frac{1}{\sqrt{\tau_n(k)}}\left(\frac{1}{N}\sum_{p=1}^{N}\exp(ik\vt_p\cdot(\my_n-\mz))\right)^2\\
        &=\sum_{n=1}^{M}\frac{J_0(k|\my_n-\mz|)^2}{\sqrt{\tau_n(k)}}=\sum_{n=1}^{M}\frac{J_0(k|\my_n-\mz|)^2}{\sqrt{|\sigma_{\rho^{-1}(n)}(k)|}}.
      \end{aligned}
      \end{align}

      And, by (\ref{structureMUSIC1}), we can evaluate following
      \begin{align}
      \begin{aligned}\label{term2}
        \sum_{n=M+1}^{N}\frac{|\langle\mU_n(k),\mg_{\mathrm{D}}(\mz;k)\rangle|^2}{\sqrt{\tau_n(k)}}
        &\approx\sum_{n=M+1}^{N}\frac{\mg_{\mathrm{D}}(\mz;k)^*\mU_n(k)\mU_n(k)^*\mg_{\mathrm{D}}(\mz;k)}{\sqrt{\epsilon}}\\
        &=\frac{1}{\sqrt{\epsilon}}\mg_{\mathrm{D}}(\mz;k)^*\left(\sum_{n=M+1}^{N}\mU_n(k)\mU_n(k)^*\right)\mg_{\mathrm{D}}(\mz;k)\\
        &=\frac{1}{\sqrt{\epsilon}}\mg_{\mathrm{D}}(\mz;k)^*\left(\mathbb{I}_{N}-\sum_{n=1}^{M}\mU_n(k)\mU_n(k)^*\right)\mg_{\mathrm{D}}(\mz;k)\\
        &=\frac{1}{\sqrt{\epsilon}}\left(1-\sum_{n=1}^{M}|\langle\mU_n(k),\mg_{\mathrm{D}}(\mz;k)\rangle|^2\right)\\
        &=\frac{1}{\sqrt{\epsilon}}\left(1-\sum_{n=1}^{M}J_0(k|\my_n-\mz|)^2\right).
      \end{aligned}
      \end{align}
      Hence, (\ref{structureLS1}) can be derived by combining (\ref{term1}) and (\ref{term2}).
    \item For TE case, Based on the physical factorization of MSR matrix, $\mU_n(k)$ is of the form (see \cite{HSZ} for instance)
      \[\mU_n(k)\approx\frac{1}{\sqrt{N}}\bigg[\vt_1\cdot\vn(\my_n)\exp(ik\vt_1\cdot\my_n),\cdots,\vt_N\cdot\vn(\my_n)\exp(ik\vt_N\cdot\my_n)\bigg]^T=\mg_{\mathrm{N}}(\my_n;k),\]
      for $n=1,2,\cdots,M$. With this, we can evaluate
      \begin{align}
      \begin{aligned}\label{term3}
        \sum_{n=1}^{M}\frac{|\langle\mU_n(k),\mg_{\mathrm{D}}(\mz;k)\rangle|^2}{\sqrt{\tau_n(k)}}&=\sum_{n=1}^{M}\frac{|\langle\mg_{\mathrm{N}}(\my_n;k),\mg_{\mathrm{D}}(\mz;k)\rangle|^2}{\sqrt{\tau_n(k)}}\\
        &=\sum_{n=1}^{M}\frac{1}{\sqrt{\tau_n(k)}}\left(\frac{1}{N}\sum_{p=1}^{N}\vt_p\cdot\vn(\my_n)\exp(ik\vt_p\cdot(\my_n-\mz))\right)^2\\
        &=\sum_{n=1}^{M}\frac{1}{\sqrt{\tau_n(k)}}\left(\frac{\my_n-\mz}{|\my_n-\mz|}\cdot\vn(\my_n)\right)^2J_1(k|\my_n-\mz|)^2\\
        &=\sum_{n=1}^{M}\frac{1}{\sqrt{|\sigma_{\rho^{-1}(n)}(k)|}}\left(\frac{\my_n-\mz}{|\my_n-\mz|}\cdot\vn(\my_n)\right)^2J_1(k|\my_n-\mz|)^2.
      \end{aligned}
      \end{align}
      And, by (\ref{structureMUSIC2}), we can evaluate following
      \begin{align}
      \begin{aligned}\label{term4}
        \sum_{n=M+1}^{N}\frac{|\langle\mU_n(k),\mg_{\mathrm{D}}(\mz;k)\rangle|^2}{\sqrt{\tau_n(k)}}
        &\approx\sum_{n=M+1}^{N}\frac{\mg_{\mathrm{D}}(\mz;k)^*\mU_n(k)\mU_n(k)^*\mg_{\mathrm{D}}(\mz;k)}{\sqrt{\epsilon}}\\
        &=\frac{1}{\sqrt{\epsilon}}\mg_{\mathrm{D}}(\mz;k)^*\left(\sum_{n=M+1}^{N}\mU_n(k)\mU_n(k)^*\right)\mg_{\mathrm{D}}(\mz;k)\\
        &=\frac{1}{\sqrt{\epsilon}}\mg_{\mathrm{D}}(\mz;k)^*\left(\mathbb{I}_{N}-\sum_{n=1}^{M}\mU_n(k)\mU_n(k)^*\right)\mg_{\mathrm{D}}(\mz;k)\\
        &=\frac{1}{\sqrt{\epsilon}}\left(1-\sum_{n=1}^{M}|\langle\mU_n(k),\mg_{\mathrm{D}}(\mz;k)\rangle|^2\right)\\
        &=\frac{1}{\sqrt{\epsilon}}\left(1-\sum_{n=1}^{M}\left(\frac{\my_n-\mz}{|\my_n-\mz|}\cdot\vn(\my_n)\right)^2J_1(k|\my_n-\mz|)^2\right).
      \end{aligned}
      \end{align}
      Hence, (\ref{structureLS2}) can be derived by combining (\ref{term3}) and (\ref{term4}).
  \end{enumerate}
\end{proof}

Based on the recent works \cite{AGKPS,G,HHSZ,JKHP,JP,KP,P1,Ppreprint,P2,P3,P4,PL2,PP}, it is confirmed that applying multi-frequency offers better results than applying single-frequency in subspace migration and MUSIC algorithm. Hence, it is expected that following multi-frequency based imaging functional
\begin{equation}\label{ImagingLSM}
  \mathcal{I}_{\mathrm{LSM}}(\mz;k_1,k_{F})=\left(\sum_{f=1}^{F}\sum_{n=1}^{N}\frac{|\langle\mE_n(k_f),\mg_{\mathrm{D}}(\mz;k_f)\rangle|^2}{\sqrt{|\sigma_n(k_f)|}}\right)^{-1},
\end{equation}
should be an improved version of (\ref{ImagingLS}). Following result supports this fact. The proof is similar to the \cite[Theorem 3.2]{Ppreprint} and \cite[Theorem 3.9]{Ppreprint} for TM and TE case, respectively.

\begin{thm}
  Assume that $k_1<k_2<\cdots<k_F$, and $N$ are sufficiently large enough and $\mg_{\mathrm{D}}(\mz;k_f)$ is defined in (\ref{testvector}). Let $\mathcal{N}:=\set{1,2,\cdots,N}$, $\rho_f:\mathcal{N}\longrightarrow\mathcal{N}$ be a one-to-one function for each $f=1,2,\cdots,F$, and $\epsilon$ is a small value close to zero. Then
  \begin{enumerate}
    \item For TM case, $\mathcal{I}_{\mathrm{LSM}}(\mz;k_1,k_F)$ can be represented as follows
      \begin{equation}\label{structureLSM1}
        \mathcal{I}_{\mathrm{LSM}}(\mz;k_1,k_F)\approx\left\{\sum_{f=1}^{F}\sum_{n=1}^{M}\frac{J_0(k_f|\my_n-\mz|)^2}{\sqrt{|\sigma_{\rho_f^{-1}(n)}(k)|}} +\frac{F}{\sqrt{\epsilon}}\left(1-\sum_{n=1}^{M}\Lambda_1(\mz,\my_n;k_1,k_F)\right)\right\}^{-1},
      \end{equation}
      where
      \begin{multline*}
        \Lambda_1(\mz,\my_n;k_1,k_F)=\frac{k_F}{k_F-k_1}\bigg(J_0(k_F|\my_n-\mz|)^{2}+J_1(k_F|\my_n-\mz|)^{2}\bigg)\\
        -\frac{k_1}{k_F-k_1}\bigg(J_0(k_1|\my_n-\mz|)^{2}+J_1(k_1|\my_n-\mz|)^{2}\bigg).
      \end{multline*}
    \item For TE case, $\mathcal{I}_{\mathrm{LSM}}(\mz;k_1,k_F)$ can be represented as follows
      \begin{align}
      \begin{aligned}\label{structureLSM2}
        \mathcal{I}_{\mathrm{LSM}}(\mz;k_1,k_F)\approx&\left\{\sum_{f=1}^{F}\sum_{n=1}^{M}\frac{1}{\sqrt{|\sigma_{\rho_f^{-1}(n)}(k)|}}\left(\frac{\my_n-\mz}{|\my_n-\mz|}\cdot\vn(\my_n)\right)^2J_1(k_f|\my_n-\mz|)^2\right.\\
        &\left.+\frac{F}{\sqrt{\epsilon}}\left(1-\sum_{n=1}^{M}\left(\frac{\my_n-\mz}{|\my_n-\mz|}\cdot\vn(\my_n)\right)^2\int_{k_1}^{k_F}J_1(k|\my_n-\mz|)^2dk\right)\right\}^{-1}.
      \end{aligned}
      \end{align}
  \end{enumerate}
\end{thm}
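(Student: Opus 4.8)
The plan is to reduce \eqref{ImagingLSM} to a frequency-by-frequency application of Theorem~\ref{TheoremLS} and then pass from the sum over $f$ to a frequency average. For each fixed $f$ I would repeat the opening of the proof of Theorem~\ref{TheoremLS}: the correspondence \eqref{EDSVD} between the eigen- and singular decompositions of $\mathbb{A}(k_f)$ supplies a bijection $\rho_f$ with $|\sigma_n(k_f)|\approx\tau_{\rho_f(n)}(k_f)$ and $\mE_n(k_f)\approx\pm\mU_{\rho_f(n)}(k_f)$, so that
\[
  \mathcal{I}_{\mathrm{LSM}}(\mz;k_1,k_F)\approx\left(\sum_{f=1}^{F}\sum_{n=1}^{N}\frac{|\langle\mU_n(k_f),\mg_{\mathrm{D}}(\mz;k_f)\rangle|^{2}}{\sqrt{\tau_n(k_f)}}\right)^{-1}.
\]
Splitting the inner sum at $n=M$ and invoking, for $1\le n\le M$, the physical factorization $\mU_n(k_f)\approx\mg_{\mathrm{D}}(\my_n;k_f)$ in the TM case and $\mU_n(k_f)\approx\mg_{\mathrm{N}}(\my_n;k_f)$ in the TE case, together with Lemma~\ref{TheoremBessel}, and, for $M<n\le N$, the estimate \eqref{smallSV} (now in the form $\tau_n(k_f)\approx\epsilon$, as in the proof of Theorem~\ref{TheoremLS}) together with the projection identity \eqref{structureMUSIC1}/\eqref{structureMUSIC2}, reproduces \eqref{term1}--\eqref{term2} and \eqref{term3}--\eqref{term4} verbatim with $k$ replaced by $k_f$. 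Summing over $f$ then produces the first (signal) double sums in \eqref{structureLSM1}/\eqref{structureLSM2} exactly, together with a noise contribution of the shape
\[
  \frac{1}{\sqrt{\epsilon}}\sum_{f=1}^{F}\Bigl(1-\sum_{n=1}^{M}(\cdots)\Bigr)=\frac{F}{\sqrt{\epsilon}}\Bigl(1-\sum_{n=1}^{M}\frac{1}{F}\sum_{f=1}^{F}(\cdots)\Bigr),
\]
where $(\cdots)$ stands for $J_0(k_f|\my_n-\mz|)^{2}$ in the TM case and for $\bigl(\tfrac{\my_n-\mz}{|\my_n-\mz|}\cdot\vn(\my_n)\bigr)^{2}J_1(k_f|\my_n-\mz|)^{2}$ in the TE case.

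It remains to take the limit in $f$. Since $k_1<k_2<\cdots<k_F$ form a fine, nearly uniform partition of the band $[k_1,k_F]$, I would use the Riemann-sum approximation $\tfrac1F\sum_{f=1}^{F}h(k_f)\approx\tfrac{1}{k_F-k_1}\int_{k_1}^{k_F}h(k)\,dk$; the factor $\bigl(\tfrac{\my_n-\mz}{|\my_n-\mz|}\cdot\vn(\my_n)\bigr)^{2}$ is independent of $k$ and pulls out of the average and the integral alike. In the TM case the resulting integral is elementary: from $J_0'=-J_1$ and $J_1'(t)=J_0(t)-t^{-1}J_1(t)$ one gets $\tfrac{d}{dt}\bigl[t\bigl(J_0(t)^{2}+J_1(t)^{2}\bigr)\bigr]=J_0(t)^{2}-J_1(t)^{2}$, so that, writing $r=|\my_n-\mz|$,
\[
  \frac{1}{k_F-k_1}\int_{k_1}^{k_F}\bigl(J_0(kr)^{2}-J_1(kr)^{2}\bigr)\,dk=\frac{1}{k_F-k_1}\Bigl[k\bigl(J_0(kr)^{2}+J_1(kr)^{2}\bigr)\Bigr]_{k=k_1}^{k=k_F}=\Lambda_1(\mz,\my_n;k_1,k_F),
\]
and the TM noise term becomes $\tfrac{F}{\sqrt{\epsilon}}\bigl(1-\sum_{n=1}^{M}\Lambda_1(\mz,\my_n;k_1,k_F)\bigr)$, up to a lower-order integral of $J_1^{2}$ that is negligible on the part of the search region where $\mathcal{I}_{\mathrm{LSM}}$ forms the image; this gives \eqref{structureLSM1}. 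In the TE case $\int J_1(t)^{2}\,dt$ has no elementary primitive, so the integral is simply retained and one arrives at \eqref{structureLSM2}.

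I expect the genuine difficulty to be controlling the Riemann-sum step uniformly in the sampling point $\mz$: the integrands $J_0(k|\my_n-\mz|)^{2}$ and $J_1(k|\my_n-\mz|)^{2}$ oscillate in $k$ with a period depending on $|\my_n-\mz|$, so one needs $F$ large and the spacing of the $k_f$ close enough to uniform that the quadrature error stays small relative to the $1/\sqrt{\epsilon}$ prefactor for every value of $|\my_n-\mz|$ that occurs in the search region. One should likewise check that the $O(F)$ accumulation of the per-frequency errors --- those behind \eqref{smallSV}, behind $\mU_n(k_f)\approx\mg_{\mathrm{D}}(\my_n;k_f)$ and $\mU_n(k_f)\approx\mg_{\mathrm{N}}(\my_n;k_f)$, and behind Lemma~\ref{TheoremBessel} --- remains of lower order; this is precisely where the hypotheses that $N$ and each $k_f$ (in particular $k_1$) are large enter. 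Finally, nothing above requires the relabelings $\rho_f$ to be compatible for different $f$: the singular values of $\mathbb{A}(k)$ may cross as $k$ varies, which is why a separate $\rho_f$ is allowed for each $f$ in the statement.
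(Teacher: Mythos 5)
Your argument is correct and is essentially the route the paper intends: the paper itself omits the proof (deferring to Theorems 3.2 and 3.9 of the cited preprint), but the indicated strategy --- apply the single-frequency structure theorem of Theorem \ref{TheoremLS} at each $k_f$, sum over $f$, convert the frequency average of $J_0(k_f|\my_n-\mz|)^2$ into $\frac{1}{k_F-k_1}\int_{k_1}^{k_F}J_0(k|\my_n-\mz|)^2dk$, and use $\frac{d}{dt}\left[t\left(J_0(t)^2+J_1(t)^2\right)\right]=J_0(t)^2-J_1(t)^2$ while discarding the lower-order $\int J_1^2$ remainder --- is exactly what you carry out, including the correct observations about per-frequency relabelings $\rho_f$ and the accumulation of quadrature errors. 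The only discrepancy is cosmetic: your normalization yields $\frac{1}{k_F-k_1}\int_{k_1}^{k_F}J_1(k|\my_n-\mz|)^2dk$ in the TE noise term, whereas \eqref{structureLSM2} omits the factor $\frac{1}{k_F-k_1}$; this appears to be a typo in the statement, since the TM counterpart $\Lambda_1$ carries that normalization.
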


\section{Numerical examples}\label{sec:4}
In this section, some numerical results are exhibited for supporting identified structure (\ref{structureLS1}). For this, three small cracks
\begin{align*}
  \Gamma_1&=\set{[s-0.6,-0.2]^\mathrm{T}:-\ell\leq s\leq\ell}\\
  \Gamma_2&=\set{R_{\pi/4}[s+0.4,s+0.35]^\mathrm{T}:-\ell\leq s\leq\ell}\\
  \Gamma_3&=\set{R_{7\pi/6}[s+0.25,s-0.6]^\mathrm{T}:-\ell\leq s\leq\ell}
\end{align*}
and three extended cracks are chosen:
\begin{align*}
  \Gamma_4&=\set{\left[s,\frac{1}{2}\cos\frac{s\pi}{2}+\frac{1}{5}\sin\frac{s\pi}{2}-\frac{1}{10}\cos\frac{3s\pi}{2}\right]^\mathrm{T}:s\in[-1,1]}\\
  \Gamma_5&=\set{\left[2\sin\frac{s}{2},\sin s\right]^\mathrm{T}:s\in\left[\frac{\pi}{4},\frac{7\pi}{4}\right]}\\
  \Gamma_6&=\set{\left[s-0.2,-0.5s^2+0.6\right]^\mathrm{T}:s\in[-0.5,0.5]}\cup\set{\left[s+0.2,s^3+s^2-0.6\right]^\mathrm{T}:s\in[-0.5,0.5]}.
\end{align*}
Here, $R_{\theta}$ denotes the rotation by $\theta$ and $\ell=0.05$. We apply $N=12$ different incident and observation directions for small ($\Gamma_1,\Gamma_2,\Gamma_3$), $N=32$ (TM case) and $N=48$ (TE case) directions for extended cracks ($\Gamma_4,\Gamma_5,\Gamma_6$), respectively, such that
\[\vt_n=\left[\cos\frac{2\pi n}{N},\sin\frac{2\pi n}{N}\right]^\mathrm{T},\]
and wavenumber $k_f$ is of the form $k=2\pi/\lambda_f$, where $\lambda_f$ is the given wavelength for $f=1,2,\cdots,F=10$.

It is worth emphasizing that every Far-field pattern datas $u_{\infty}(\hat{\mx},\vt;k)$ of (\ref{FFPD}) and (\ref{FFPN}) are generated by the formulation involving the solution of a second-kind Fredholm integral equation along the crack (see \cite[Chapter 3]{N} and \cite[Chapter 4]{N} for single and multiple cracks, respectively) in order to avoid \textit{inverse crime}. After obtaining the dataset, a $20$dB white Gaussian random noise is added.

Figure \ref{DistrributionEVSV} shows the distributions of eigenvalue and singular values of $\mathbb{A}(k)$ for $\lambda=0.4$ when the crack is $\Gamma_4$ in order to examine (\ref{EDSVD}). Based on this figure, we can easily observe that the values $|\sigma_n(k)|$, $n=1,2,\cdots,N$, are almost same as $\tau_{\rho(n)}(k)$ when $\rho(n)=N-n+1$. In Table \ref{Errors}, evaluated values of $|\mE_n(k)-\mU_{\rho(n)}(k)|$ are shown for $n=1,2,\cdots,20$. These values tells us that every eigenvectors $\mE_n(k)$ and singular vectors are $\mU_{\rho(n)}(k)$ almost same except the directions (for example, $\mE_1(k),\mE_3(k),\mE_5(k),\cdots$).

\begin{figure}[!ht]
\begin{center}
\includegraphics[width=0.49\textwidth]{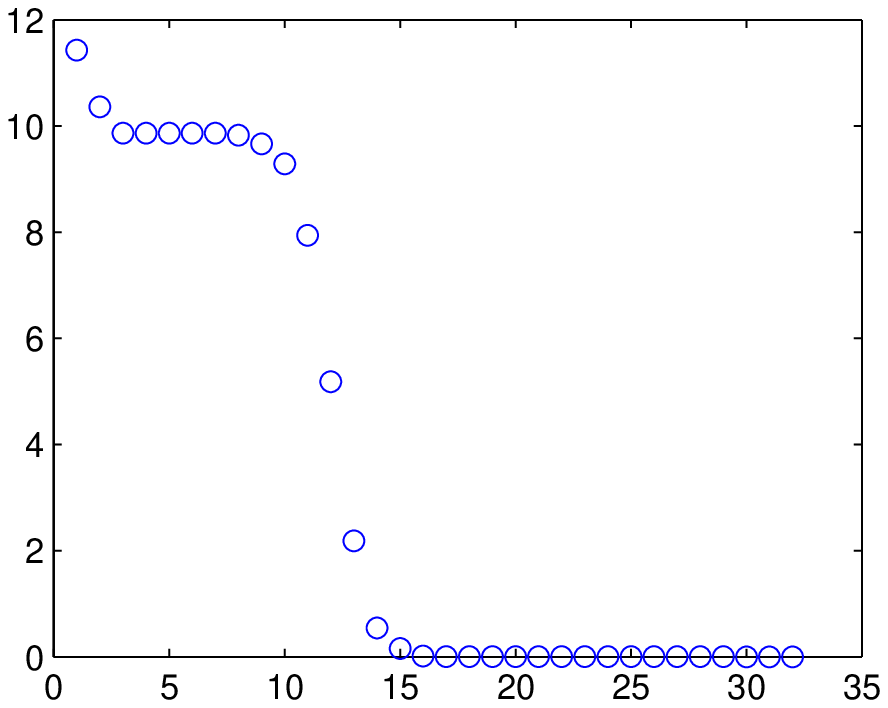}
\includegraphics[width=0.49\textwidth]{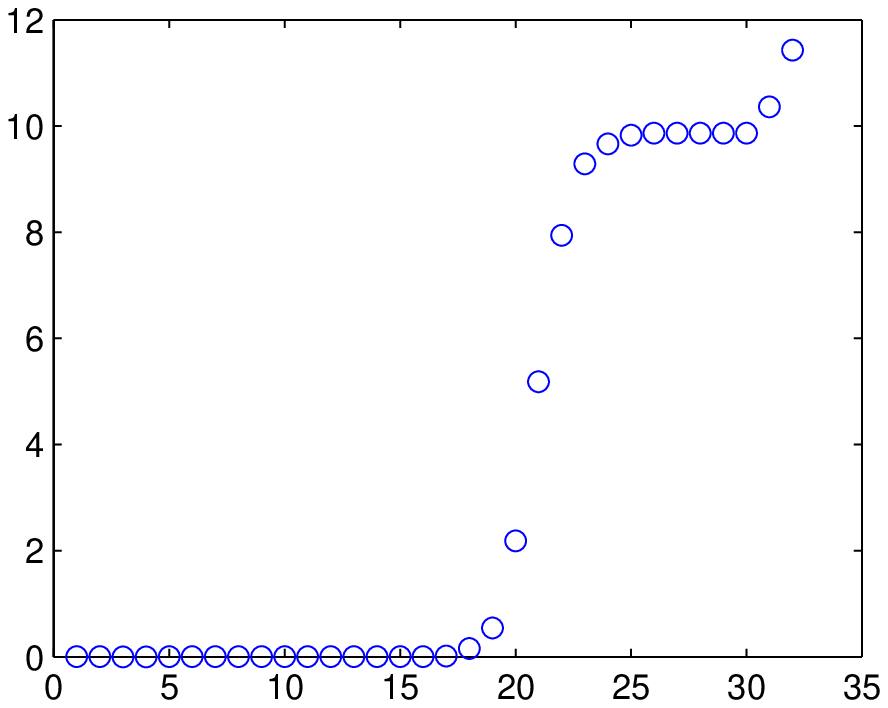}
\caption{\label{DistrributionEVSV}Distribution of singular values (left) and eigenvalues (right) of $\mathbb{A}(k)$ when $\lambda=0.4$. In this case, $\rho(n)=N-n+1$.}
\end{center}
\end{figure}

\begin{table}
\begin{center}
\begin{tabular}{c|c||c|c}
\hline
$n$&$|\mE_n(k)-\mU_{\rho(n)}(k)|$&$n$&$|\mE_n(k)-\mU_{\rho(n)}(k)|$\\
\hline\hline
$1$&$2.00000000000000000000$&$11$&$1.65562620502031\mathrm{e}-15$\\
$2$&$6.65870414549247\mathrm{e}-15$&$12$&$2.41720215436400\mathrm{e}-15$\\
$3$&$2.00000000000000000000$&$13$&$2.43989740405264\mathrm{e}-15$\\
$4$&$1.84821973386216\mathrm{e}-10$&$14$&$2.00000000000000000000$\\
$5$&$2.00000000000000000000$&$15$&$1.57744220582806\mathrm{e}-14$\\
$6$&$1.56714451336802\mathrm{e}-11$&$16$&$2.00000000000000000000$\\
$7$&$2.00000000000000000000$&$17$&$2.00000000000000000000$\\
$8$&$5.94301230666634\mathrm{e}-14$&$18$&$9.46987310586183\mathrm{e}-12$\\
$9$&$2.00000000000000000000$&$19$&$2.00000000000000000000$\\
$10$&$2.00000000000000000000$&$20$&$6.20211785729028\mathrm{e}-10$\\
\hline
\end{tabular}
\caption{\label{Errors}Evaluated values of $|\mE_n(k)-\mU_{\rho(n)}(k)|$ with $\rho(n)=N-n+1$.}
\end{center}
\end{table}

Figure \ref{SmallCracks} shows the maps of $\mathcal{I}_{\mathrm{LS}}(\mz;k_{10})$ and $\mathcal{I}_{\mathrm{LSM}}(\mz;k_1,k_{10})$ (right) for $\lambda_1=0.6$ and $\lambda_{10}=0.4$. As we observed in Theorem \ref{TheoremLS}, $\mathcal{I}_{\mathrm{LS}}(\mz;k_{10})$ plots peak of large magnitude at the location of $\Gamma_j$, small magnitude at $\mz\notin\Gamma_j$ for $j=1,2,3$, and unexpected artifacts. However, in the map of $\mathcal{I}_{\mathrm{LSM}}(\mz;k_1,k_{10})$, there artifacts are successfully eliminated hence, we can identify (locations and rotations) cracks more easier.

\begin{figure}[!ht]
\begin{center}
\includegraphics[width=0.49\textwidth]{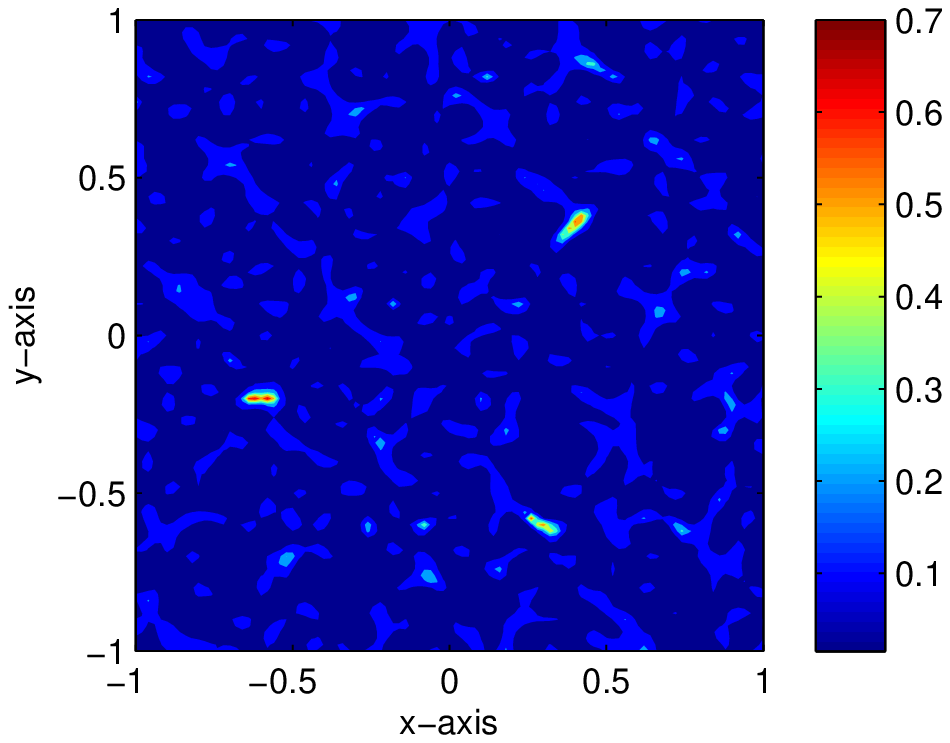}
\includegraphics[width=0.49\textwidth]{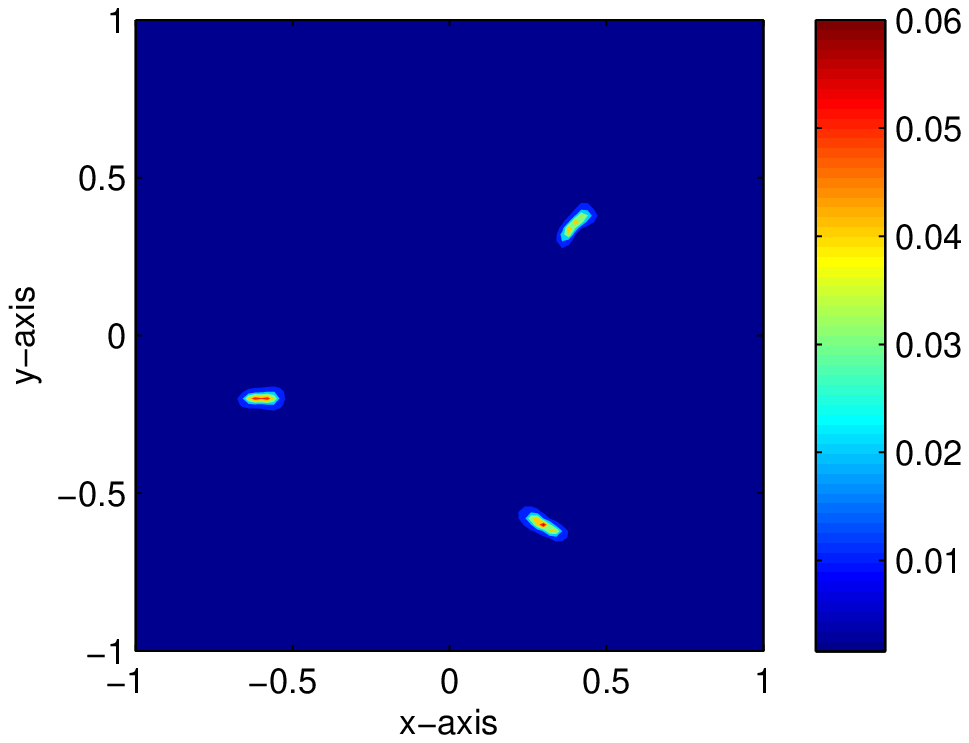}
\caption{\label{SmallCracks} (TM case) Maps of $\mathcal{I}_{\mathrm{LS}}(\mz;k_{10})$ (left) and $\mathcal{I}_{\mathrm{LSM}}(\mz;k_1,k_{10})$ (right) for $\lambda_1=0.6$ and $\lambda_{10}=0.4$ when the cracks are $\Gamma_1\cup\Gamma_2\cup\Gamma_3$.}
\end{center}
\end{figure}

Figures \ref{SingleCrackD1}, \ref{SingleCrackD2}, and \ref{SingleCrackD3} show the maps of $\mathcal{I}_{\mathrm{LS}}(\mz;k_{10})$ and $\mathcal{I}_{\mathrm{LSM}}(\mz;k_1,k_{10})$ (right) for $\lambda_1=0.7$ and $\lambda_{10}=0.4$ in TM case. Similar to the imaging of small cracks, we can identify that single-frequency linear sampling method offers very good result but unexpected artifacts are still remaining. However, multi-frequency linear sampling method successfully eliminates so that we can identify the shape of extended cracks more accurately.

\begin{figure}[!ht]
\begin{center}
\includegraphics[width=0.49\textwidth]{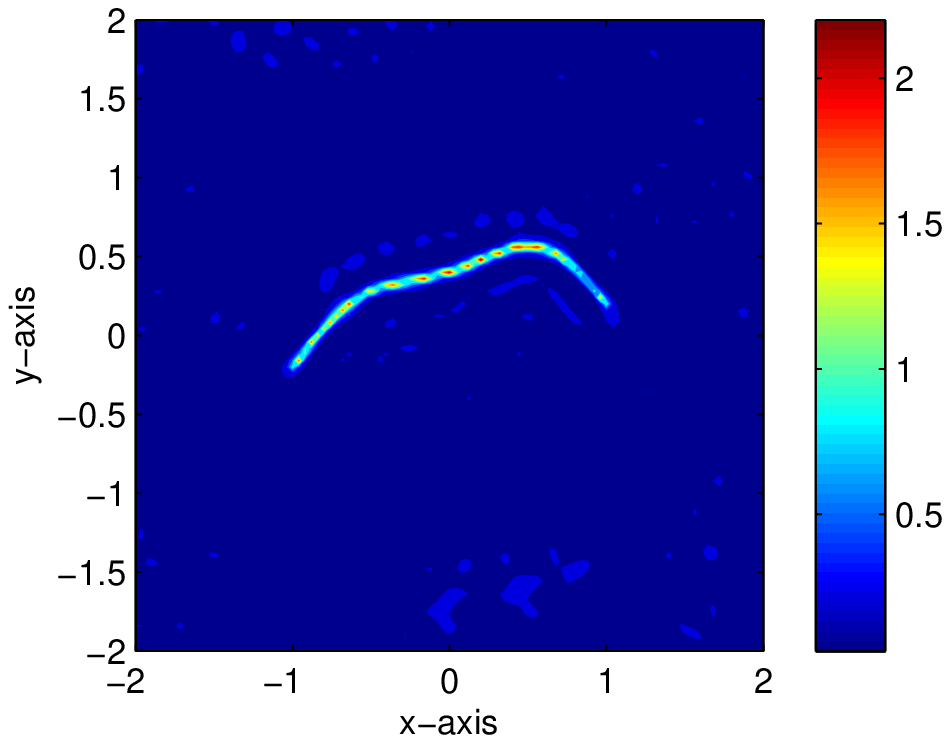}
\includegraphics[width=0.49\textwidth]{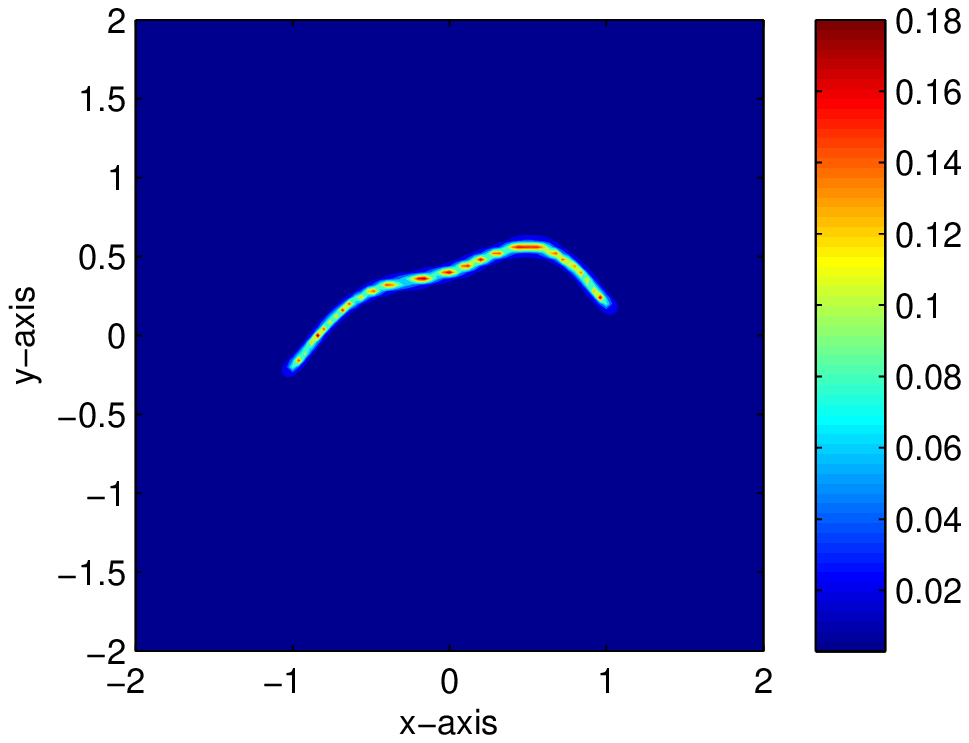}
\caption{\label{SingleCrackD1} (TM case) Maps of $\mathcal{I}_{\mathrm{LS}}(\mz;k_{10})$ (left) and $\mathcal{I}_{\mathrm{LSM}}(\mz;k_1,k_{10})$ (right) for $\lambda_1=0.7$ and $\lambda_{10}=0.4$ when the crack is $\Gamma_4$.}
\end{center}
\end{figure}

\begin{figure}[!ht]
\begin{center}
\includegraphics[width=0.49\textwidth]{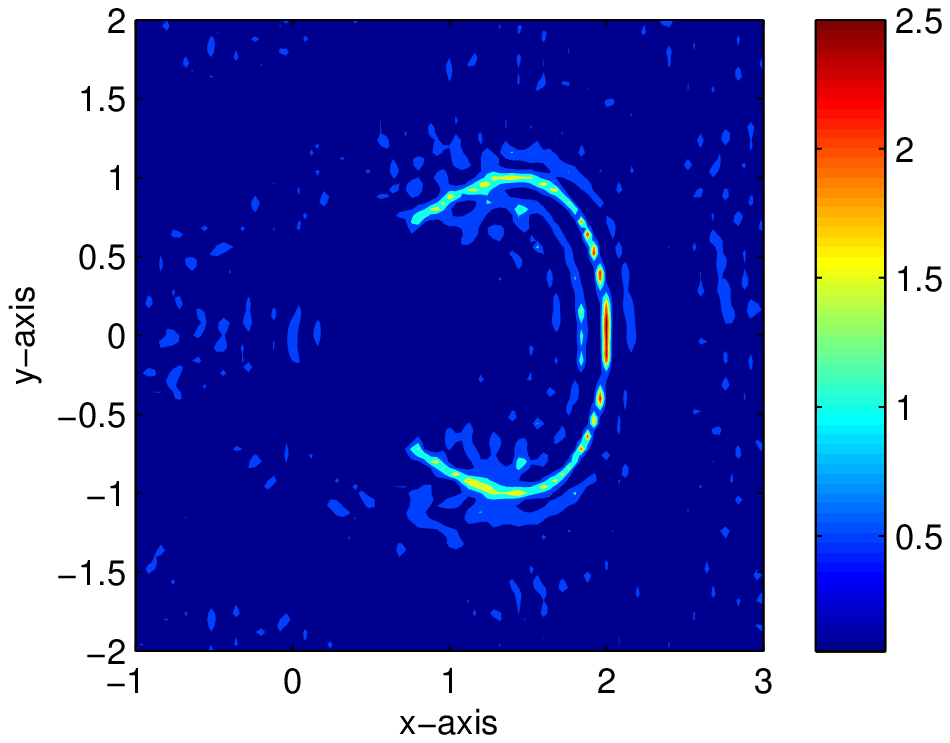}
\includegraphics[width=0.49\textwidth]{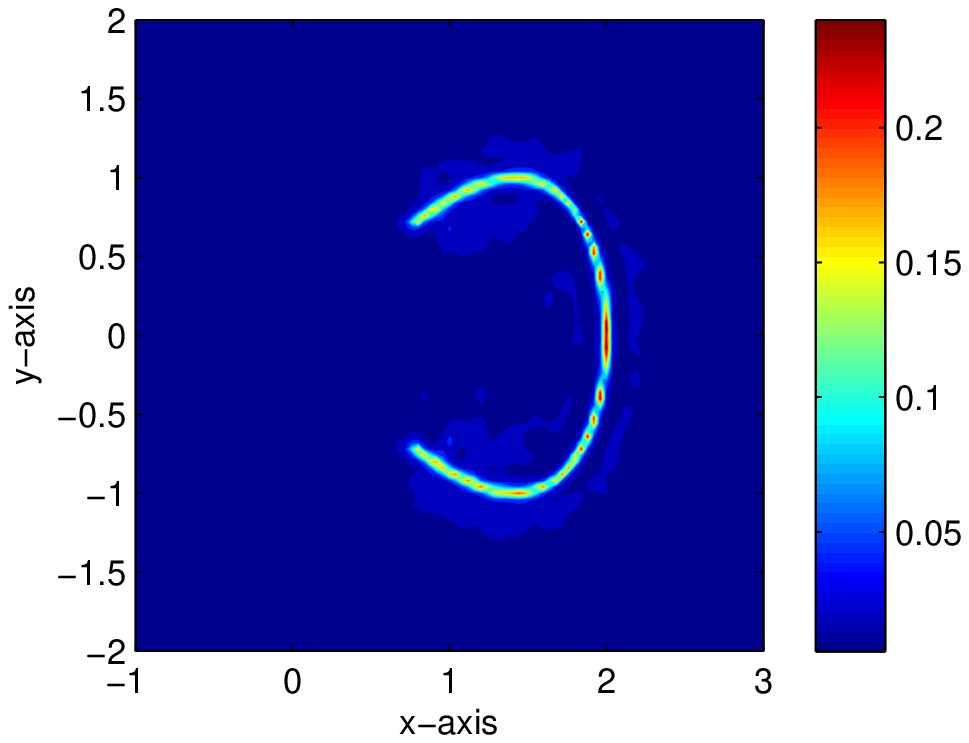}
\caption{\label{SingleCrackD2} (TM case) Same as Figure \ref{SingleCrackD1} except the crack is $\Gamma_5$.}
\end{center}
\end{figure}

\begin{figure}[!ht]
\begin{center}
\includegraphics[width=0.49\textwidth]{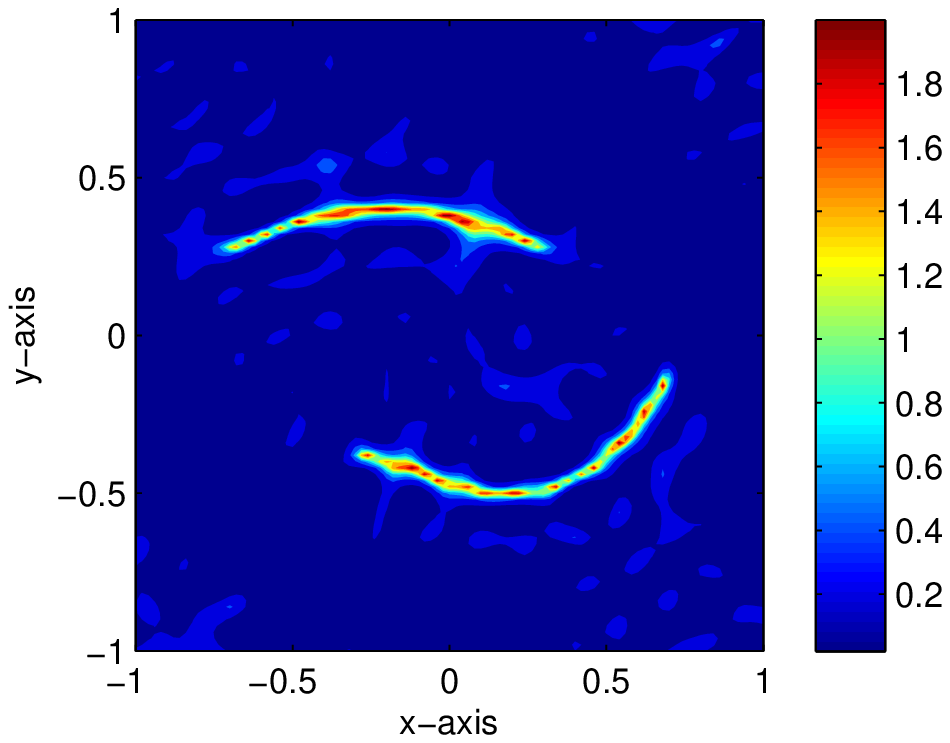}
\includegraphics[width=0.49\textwidth]{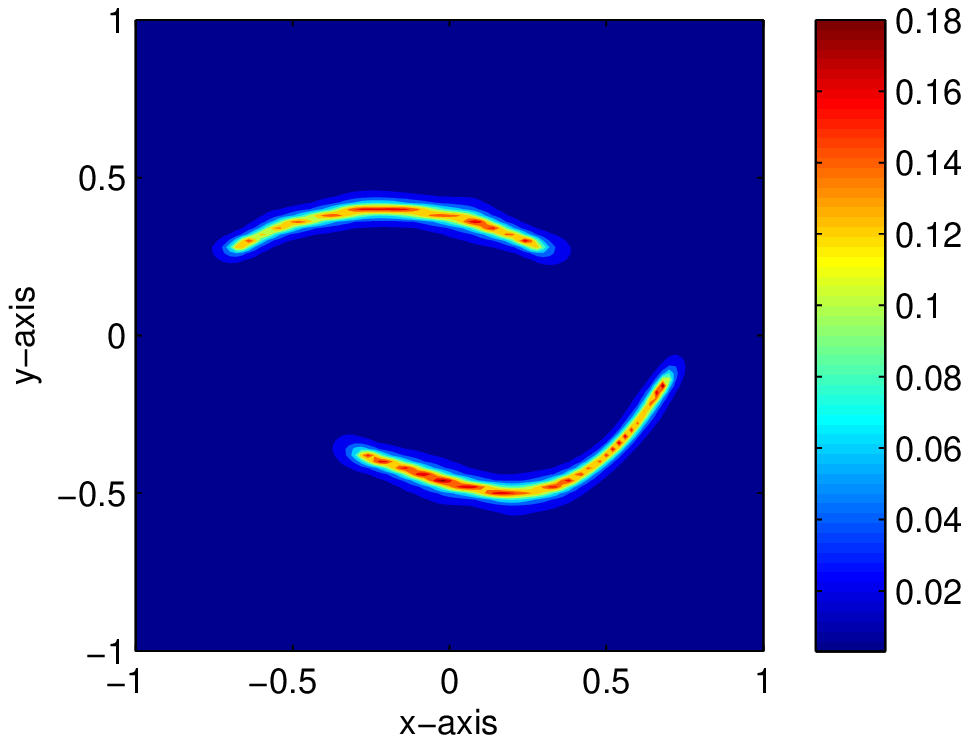}
\caption{\label{SingleCrackD3} (TM case) Same as Figure \ref{SingleCrackD1} except the cracks are $\Gamma_6$.}
\end{center}
\end{figure}

Figures \ref{SingleCrackN1}, \ref{SingleCrackN2}, and \ref{SingleCrackN3} show the maps of $\mathcal{I}_{\mathrm{LS}}(\mz;k_{10})$ and $\mathcal{I}_{\mathrm{LSM}}(\mz;k_1,k_{10})$ (right) for $\lambda_1=0.6$ and $\lambda_{10}=0.4$ in TE case. As we observed in Theorem \ref{TheoremLS}, $\mathcal{I}_{\mathrm{LS}}(\mz;k_{10})$ and $\mathcal{I}_{\mathrm{LSM}}(\mz;k_1,k_{10})$ generate two curves with large magnitude in the neighborhood of cracks. Opposite to the TM case, $\mathcal{I}_{\mathrm{LS}}(\mz;k_{10})$ does not produces good result but $\mathcal{I}_{\mathrm{LSM}}(\mz;k_1,k_{10})$ still yields acceptable results.

\begin{figure}[!ht]
\begin{center}
\includegraphics[width=0.49\textwidth]{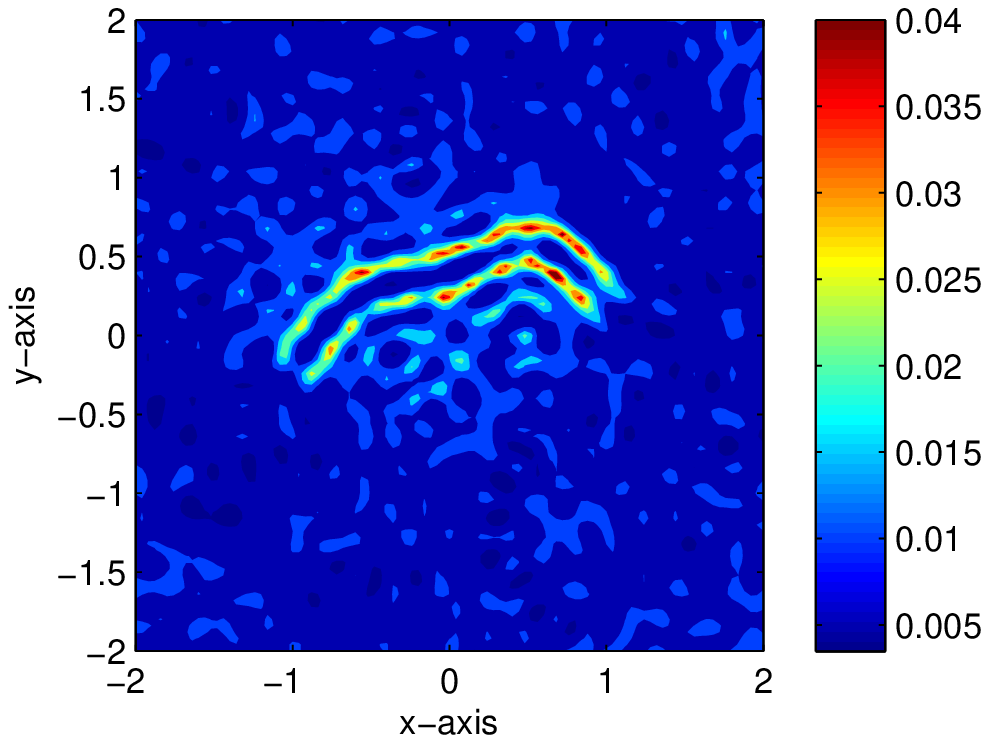}
\includegraphics[width=0.49\textwidth]{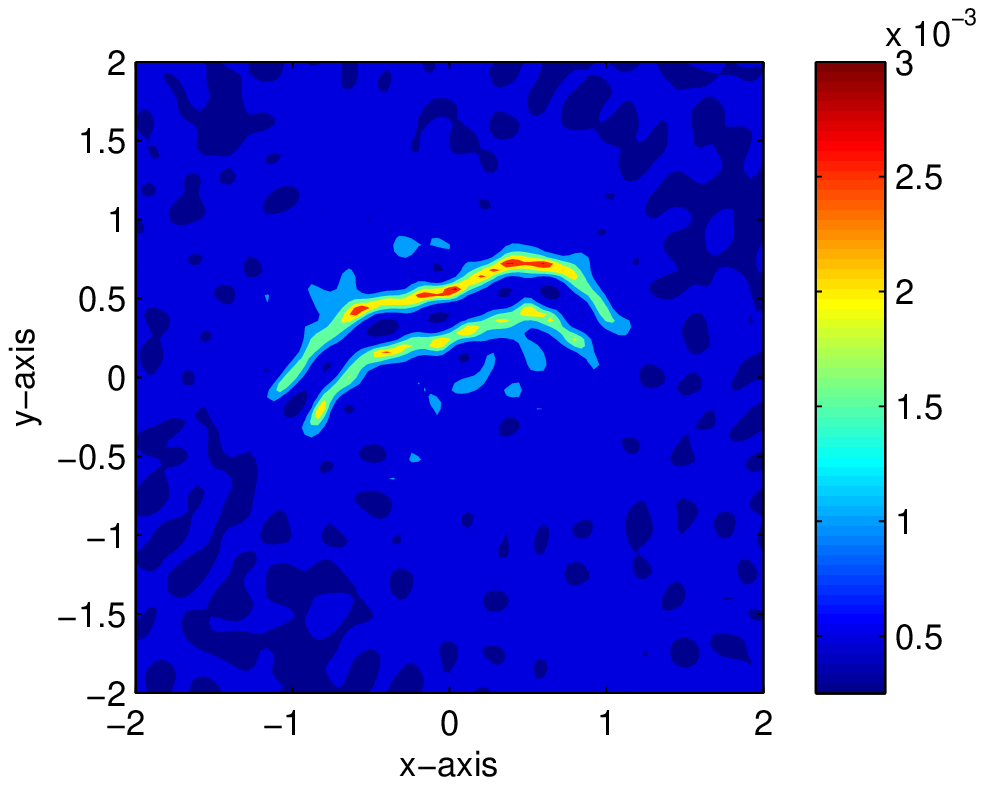}
\caption{\label{SingleCrackN1} (TE case) Maps of $\mathcal{I}_{\mathrm{LS}}(\mz;k_{10})$ (left) and $\mathcal{I}_{\mathrm{LSM}}(\mz;k_1,k_{10})$ (right) for $\lambda_1=0.7$ and $\lambda_{10}=0.4$ when the crack is $\Gamma_4$.}
\end{center}
\end{figure}

\begin{figure}[!ht]
\begin{center}
\includegraphics[width=0.49\textwidth]{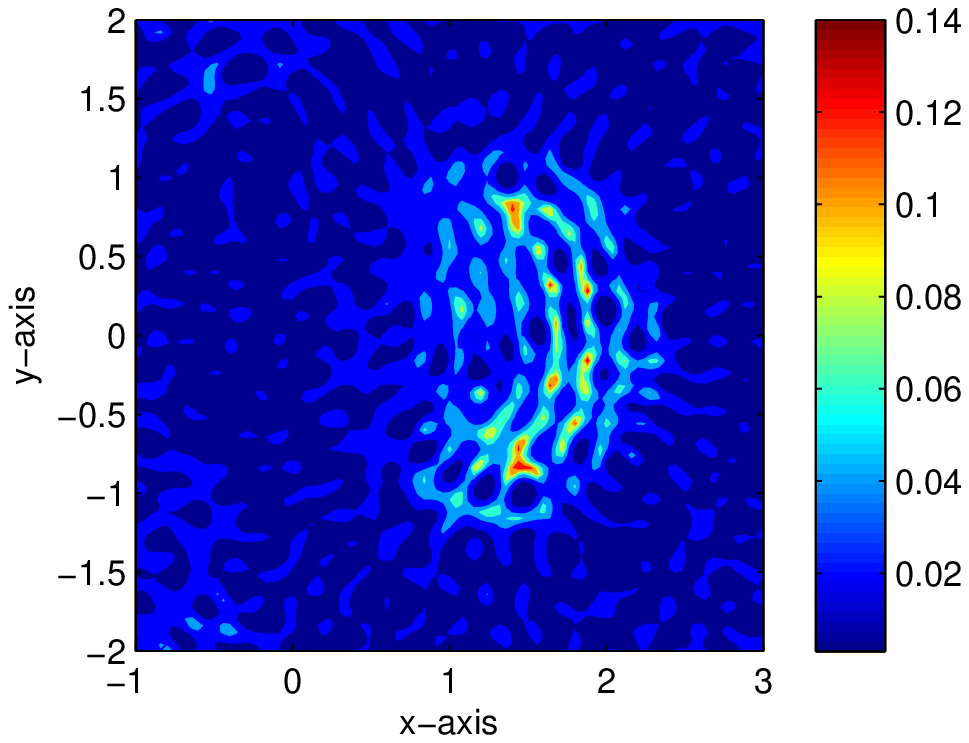}
\includegraphics[width=0.49\textwidth]{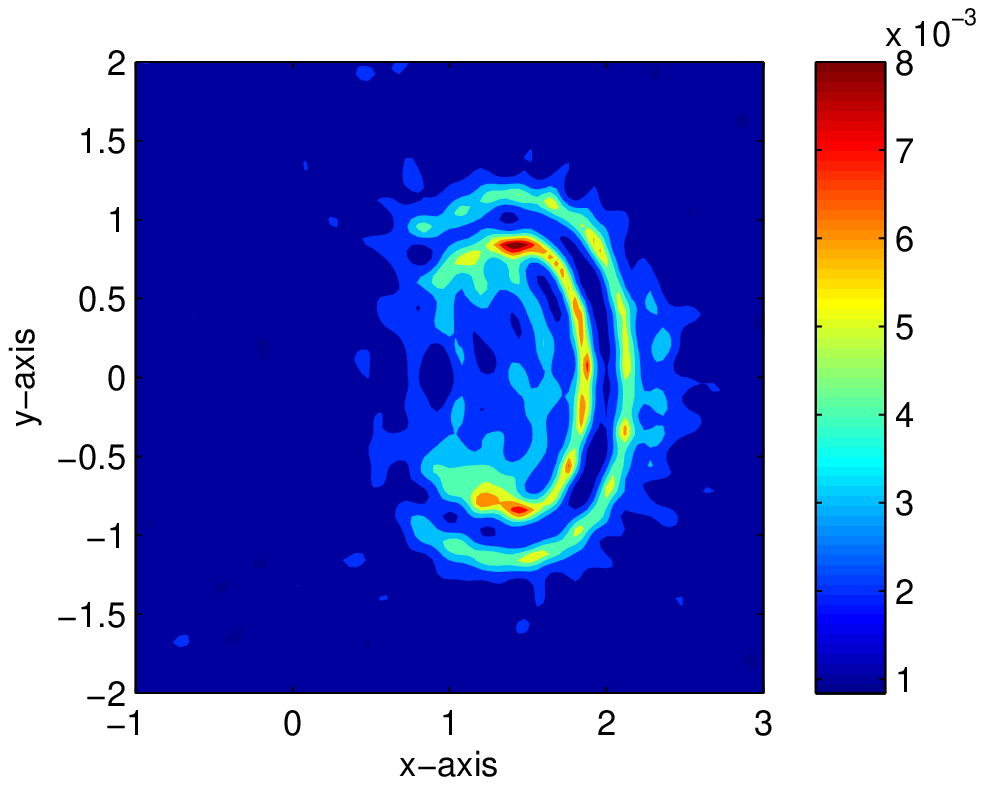}
\caption{\label{SingleCrackN2} (TE case) Same as Figure \ref{SingleCrackN1} except the crack is $\Gamma_5$.}
\end{center}
\end{figure}

\begin{figure}[!ht]
\begin{center}
\includegraphics[width=0.49\textwidth]{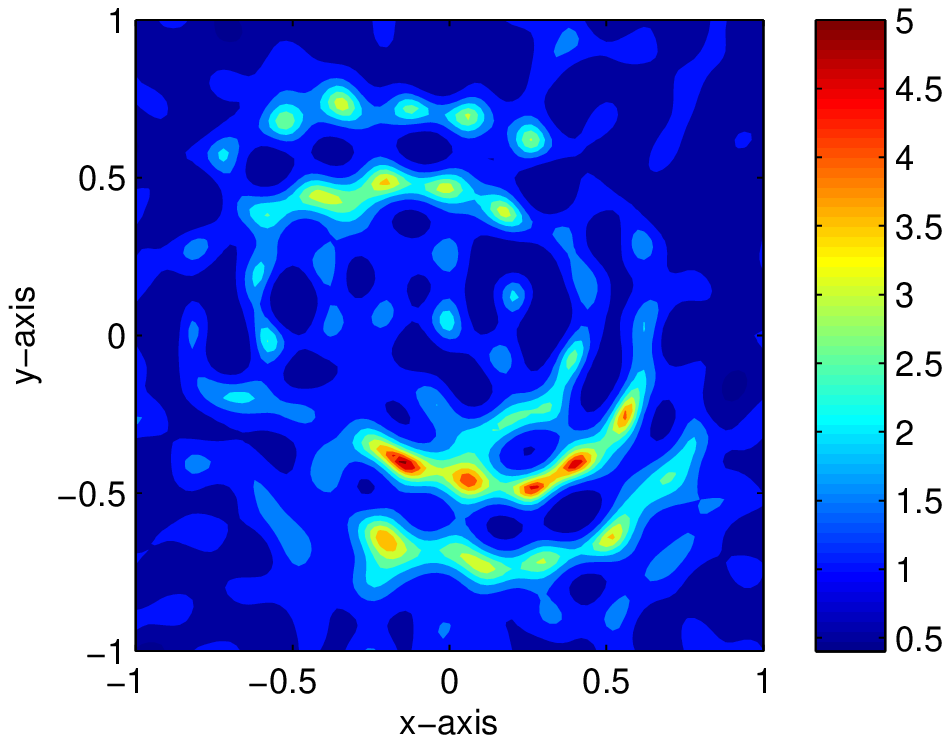}
\includegraphics[width=0.49\textwidth]{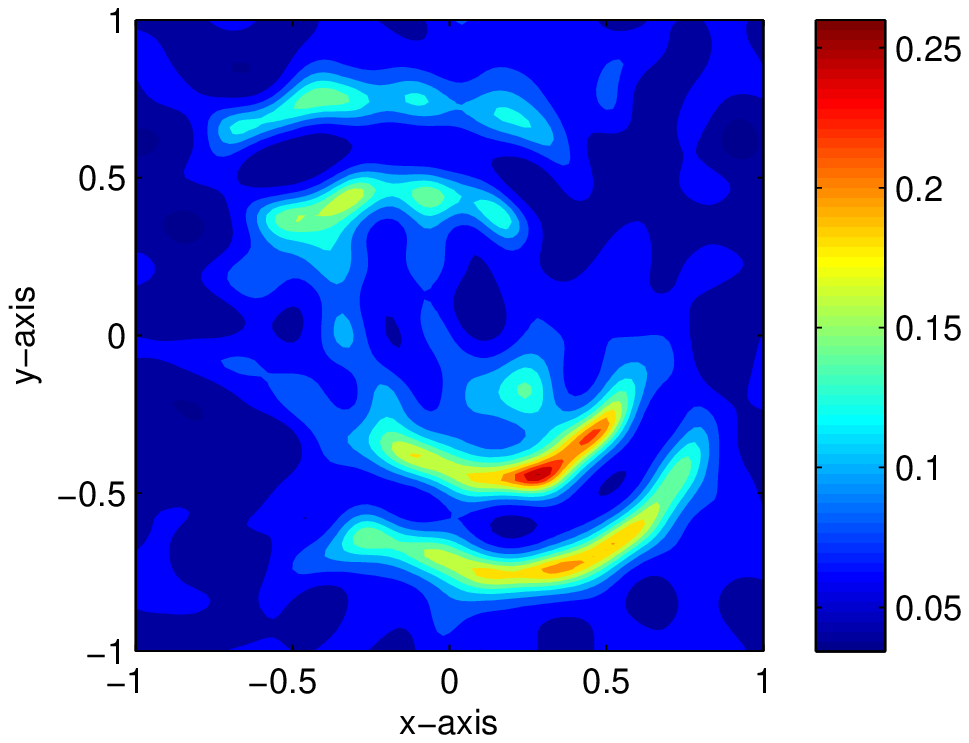}
\caption{\label{SingleCrackN3} (TE case) Same as Figure \ref{SingleCrackN1} except the cracks are $\Gamma_6$.}
\end{center}
\end{figure}

\section{Concluding remark}\label{sec:5}
Based on the structure and relationship of eigenvectors and singular vectors of MSR matrix, and integral representation formula of the Bessel function, we examined the structure of single- and multi-frequency electromagnetic imaging functions adopted in the famous linear sampling method. Because of the oscillation aspect of the Bessel function, we confirmed the reason behind the improved imaging performance by successfully applying high and multiple frequencies.

Throughout numerical results, it is confirmed that linear sampling method offers good results but they do not guarantee complete shaping of cracks. However, they can be adopted as a good initial guess of a level-set evolution or of a standard iterative algorithm \cite{ADIM,DL,K,M2,PL4} so that more accurate shape of cracks can be obtained.

Based on recent works in \cite{AGKPS,JKP,KP}, it has shown that subspace migration and MUSIC algorithm can be applied to limited-view inverse scattering problems for imaging of small targets. Motivated this, identifying the structure of imaging function used in the linear sampling method in the limited-view problem should be an interesting research topic. Moreover, discovering some properties of linear sampling method in the imaging of perfectly conducting cracks with Neumann boundary condition will be a remarkable subject.

In this paper, we considered the linear sampling method for imaging of perfectly conducting cracks. Extension to the arbitrary shaped extended target and to the three-dimensional problem will be an interesting problem.

\end{document}